\documentclass[10pt, conference, letterpaper]{IEEEtran}
\IEEEoverridecommandlockouts
\usepackage{amsmath}
\usepackage{amsthm}
\usepackage{amsfonts}
\usepackage{amssymb}
\usepackage{graphicx}
\usepackage{xcolor}
\usepackage{subcaption}
\usepackage{comment}
\usepackage{multirow}
\usepackage{verbatim}
\usepackage{dsfont}
\usepackage[ruled, vlined, linesnumbered]{algorithm2e}
\usepackage{tikz}
\newcommand*\circled[1]{\tikz[baseline=(char.base)]{
            \node[shape=circle,draw,inner sep=1pt] (char) {#1};}}

\newtheorem{lemma}{Lemma}
\newtheorem{theorem}{Theorem}

\def\eg{\textit{e.g.}}
\def\ie{\textit{i.e.}}
\def\etal{\textit{et al.}}

\SetKw{Return}{return}
\SetKw{Break}{break}
\SetKw{Continue}{continue}

\def\heavyedge{\textsf{Heavy-Edge}}
\def\ouralg{\textsf{A-SRPT}}

\def\opt{$\textit{OPT}_A$}
\def\optAone{$\textit{OPT}_{{A_{1}}}$}
\def\optAonepred{$\textit{OPT}_{\tilde{A}_{1}}$}
\def\aux{$\textit{AUX}$}


\begin{document}
\title{Prediction-Assisted Online Distributed Deep Learning Workload Scheduling in GPU Clusters}


\author{\IEEEauthorblockN{Ziyue Luo\IEEEauthorrefmark{1},
Jia Liu\IEEEauthorrefmark{1},
Myungjin Lee\IEEEauthorrefmark{2},
Ness B. Shroff\IEEEauthorrefmark{1}}
\IEEEauthorblockA{\IEEEauthorrefmark{1} Dept. of ECE, The Ohio State University, USA, Email: luo.1457@osu.edu, \{liu, shroff\}@ece.osu.edu}
\IEEEauthorblockA{\IEEEauthorrefmark{2} Cisco Research, USA, Email: myungjle@cisco.com}
\thanks{This work has been supported in part by Cisco Research Award PO-USA000EP312336, by NSF grants CAREER CNS-2110259, IIS-2324052, NSF AI Institute (AI-EDGE) CNS-2112471, CNS-2312836, CNS-2106933, CNS-2106932, CNS-1955535, and CNS-1901057, by DARPA YFA D24AP00265, by ONR grant N00014-24-1-2729, by AFRL grant PGSC-SC-111374-19s, by Army Research Office under Grants W911NF-21-1-0244 and W911NF-24-2-0205, and was sponsored by the Army Research Laboratory and was accomplished under Cooperative Agreement Number W911NF-23-2-0225. The views and conclusions contained in this document are those of the authors and should not be interpreted as representing the official policies, either expressed or implied, of the Army Research Laboratory or the U.S. Government. The U.S. Government is authorized to reproduce and distribute reprints for Government purposes notwithstanding any copyright notation herein.}
}

\maketitle

\begin{abstract}
The recent explosive growth of deep learning (DL) models has necessitated a compelling need for efficient job scheduling for distributed deep learning training with mixed parallelisms (DDLwMP) in GPU clusters. This paper proposes an adaptive shortest-remaining-processing-time-first (\textsf{A-SRPT}) scheduling algorithm, a novel prediction-assisted online scheduling approach designed to mitigate the challenges associated with DL cluster scheduling. By modeling each job as a graph corresponding to heterogeneous Deep Neural Network (DNN) models and their associated distributed training configurations, \textsf{A-SRPT} strategically assigns jobs to the available GPUs, thereby minimizing inter-server communication overhead. Observing that most DDLwMP jobs recur, \textsf{A-SRPT} incorporates a random forest regression model to predict training iterations. Crucially, \textsf{A-SRPT} maps the complex scheduling problem into a single-machine instance, which is addressed optimally by a preemptive ``shortest-remaining-processing-time-first'' strategy. This optimized solution serves as a guide for actual job scheduling within the GPU clusters, leading to a theoretically provable competitive scheduling efficiency. 
We conduct extensive real-world testbed and simulation experiments to verify our proposed algorithms.
\end{abstract}

\section{Introduction}
\label{sec::intro}

Distributed deep learning (DDL) has recently achieved remarkable successes across multiple domains, \eg, natural language processing (NLP)~\cite{brown2020language}, computer vision~\cite{he2016deep}, and computer networks~\cite{chen2018auto}.
However, the training of deep neural network (DNN) models is compute-intensive, requiring dedicated, powerful, and expensive GPU clusters~\cite{xiao2018gandiva,qiao2021pollux,weng2022mlaas},
This has necessitated developing algorithms to efficiently schedule distributed deep learning training jobs with mixed parallelisms (DDLwMP), including but not limited to data parallelism~\cite{li2014scaling}, model parallelism~\cite{shoeybi2019megatron} and pipeline parallelism~\cite{harlap2018pipedream}. 
Such scheduling algorithms are pivotal for resource allocations in GPU clusters to orchestrate DDLwMP jobs' execution.

In the areas of DDL scheduling algorithm design, many early attempts adopted a preemptive scheduling approach that permits pausing, resumption, and reallocation of running jobs for better flexibility. 
However, with ever-increasing learning model sizes, interrupting DDL job executions, including saving/loading training models into/from the host memory and potentially reallocating jobs to a different set of GPUs, incurs large overhead on the order of seconds to minutes~\cite{qiao2021pollux}. 
To pursue improved resource utilization and consistent processing of DDL jobs, some recent studies have shifted their focus towards designing non-preemptive ML cluster scheduling algorithms~\cite{han2020scheduling,wang2020communication,yu2022scheduling}, where the scheduler dedicates a set of GPUs solely for each DDLwMP job to ensure that all allocated GPUs execute simultaneously without interruption until the job's completion. 
However, all aforementioned works are designed for DDL jobs {\em without} mixed parallelisms.
To date, designing scheduling algorithms for DDLwMP remains in its infancy and there are several highly non-trivial challenges:

\textit{1)}~DDLwMP jobs differ significantly in their model architectures, consisting of diverse types of DNN layers. 
The mixture of parallelisms results in complex computation and communication patterns during training.
Thus, optimally placing DDLwMP jobs across the available GPUs, taking into account their model architectures and parallel paradigms, 
is highly challenging.
Further, resource fragmentation (available GPUs are scattered across partially occupied servers due to frequent small job allocations) exacerbates the problem.

\textit{2)}~The unpredictability of future workloads introduces another challenge, rendering the scheduling task an online problem. 
Due to the non-preemption constraints, greedily scheduling existing jobs to fully occupy the cluster's computational resources can lead to fragmentation issues and significantly delay incoming jobs, thus increasing overall latencies. 
Thus, strategic orchestration of the available jobs is needed to minimize the total job completion time: the algorithm should schedule sufficiently many jobs to maximize resource utilization 
{while reserving resources for future job arrivals.}

\textit{3)}~Many existing non-preemptive scheduling designs require the knowledge of training iterations upon jobs' submissions to estimate job training durations. 
However, DNN model training is a feedback-dependent exploration process~\cite{karmaker2021automl}. 
It is common for users to submit multiple jobs exploring different configurations of hyper-parameters and terminate most jobs due to random errors or sub-optimal convergence performance~\cite{weng2022mlaas, gu2019tiresias}. 
This implies that the actual number of job training iterations is {\em uncertain}. 
Blindly scheduling jobs according to the user-specified training iterations could lead to suboptimal performance.

To address these challenges, in this paper, we propose an adaptive shortest-remaining-processing-time-first (\ouralg{}) scheduling algorithm. 
{Our design contains two key components: 1) a {\bf GPU mapping} algorithm that judiciously assigns a DDLwMP job to a specific set of GPUs, thereby minimizing the data communication overhead during job training; and 2) a \textbf{prediction-assisted online scheduling} algorithm that strategically schedules DDLwMP jobs by incorporating a job total training iteration prediction model.
}
Our main contributions and key results are summarized as follows:

\begin{list}{\labelitemi}{\leftmargin=1em \itemindent=-0.5em \itemsep=.2em}

\item 
We represent DDLwMP jobs with various models and distributed training configurations as graphs,
based on which we further develop the \heavyedge{} algorithm, a graph-cut-based method designed to strategically allocate each job to available GPUs across servers. 
\heavyedge{} emphasizes maximizing the use of high-bandwidth interconnection for GPUs within a server (\eg, NVLink~\cite{nvlink}), thereby improving overall training efficiency. 


\item {We tackle the uncertain training duration challenge by leveraging the recurrence of DDLwMP jobs. 
First, we use a random forest regression method~\cite{breiman2001random} to predict training iterations from historical job execution traces.}
Then, by leveraging this prediction model, we develop a prediction-assisted online scheduling framework called \ouralg{} based on a {\em two-step} approach: 
1) We show that the original complex multi-dimensional GPU clustering problem can be simplified as a preemptive {\em single-machine} scheduling problem with the predicted number of training iterations for each DDLwMP job.
This simplification enables the use of the shortest remaining processing time (SRPT) principle~\cite{lawler1993sequencing}, which is {\em optimal} in scheduling jobs in the hypothetical single-machine problem; 
2) We use the virtual single-machine SRPT solution to guide our non-preemptive scheduling decisions for DDLwMP jobs in the actual cluster. 
This two-step approach allows us to design DDLwMP scheduling schemes with theoretical performance guarantee. 

\item {To validate the effectiveness of our proposed designs, we conduct {\em real-world} trace-driven testbed experiments and simulation studies based on profiled DDL workloads with mixed DNN models and a two-month DL workload trace~\cite{weng2022mlaas}.}
Our experimental results verify the superiority of our proposed algorithms over state-of-the-art DDL scheduling algorithms. 
Specifically, our proposed algorithm outperforms all baseline designs and achieves up to 92\% total job completion time reduction.
\end{list}


\section{Background and Related Work}
\label{sec::related_work}

\noindent{\bf 1) Parallelisms for Distributed DNN Training:}
DNN training is an iterative process to minimize a loss function~\cite{goodfellow2016deep}, where each iteration consists of forward propagation (FP), backward propagation (BP), and gradient update, all of which are based on mini-batches. 
The advent of large DNN models has driven the development of distributed DNN training to speed up DNN training. 
To enable distributed DNN training, data~\cite{li2014scaling}, model~\cite{shoeybi2019megatron}, and pipeline parallelisms~\cite{harlap2018pipedream, huang2019gpipe, luo2022efficient}, as shown in Fig.~\ref{fig:parallel}, are the most common.

Data parallelism {(Fig.\ref{fig:parallel}(a))} trains mini-batches on different GPUs, followed by gradient synchronization using ring AllReduce (RAR)~\cite{sergeev2018horovod} or tree AllReduce (TAR)~\cite{sanders2009two}. 
RAR forms a logical ring for communication~\cite{yu2022gadget}, while TAR uses double binary trees~\cite{sanders2009two} (e.g., NVIDIA NCCL~\cite{nccl}). 
This method requires each GPU to host a full DNN model, limiting it to small-size models. 
Model parallelism {(Fig.~\ref{fig:parallel}(b))} trains large models by distributing FPs and BPs across GPUs, each hosting a different model stage.
However, model parallelism suffers from low utilization as only one GPU is active at a time.

Building on model parallelism, pipeline parallelism {(Fig.\ref{fig:parallel}(c))} sequentially injects mini-batches into the system to allow simultaneous GPU processing.
Each model stage can have multiple replicas~\cite{fan2021dapple,luo2022efficient} trained with data parallelism to reduce stage processing time.
Pipeline parallelism can be further divided into asynchronous and synchronous pipelines.
Synchronous pipeline~\cite{huang2019gpipe, luo2022efficient} maintains a synchronization barrier between training iterations, enforcing synchronous gradient updates across all model stages to achieve a better convergence performance. 
However, such synchronization barriers may interrupt the pipeline and delay new mini-batch entries, leading to low GPU utilization. 
Asynchronous pipeline~\cite{narayanan2021memory} improves GPU utilization by continuously injecting mini-batches to increase training throughput at the price of (slight) model convergence degradation~\cite{harlap2018pipedream}.
In this work, we consider asynchronous pipeline due to its higher training efficiency.


\begin{figure}[t!]
  \centering
  \includegraphics[width=\columnwidth]{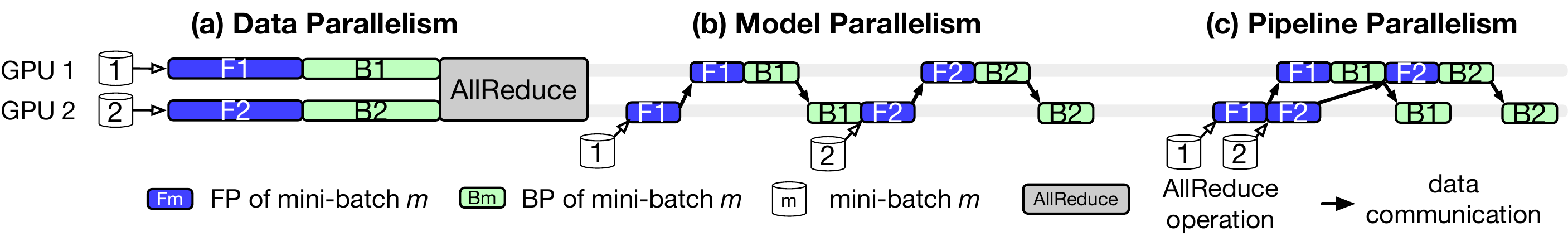}
  \caption{Three typical parallelisms for distributed DNN training.}
  \label{fig:parallel}
\end{figure}
\setlength{\textfloatsep}{0pt}

\noindent{\bf 2) Online DDL Job Scheduling:}
Early attempts on online DDL job scheduling focused on preemptive algorithms.
For instance, Optimus~\cite{peng2018optimus} constructs resource-performance models for dynamic GPU scaling to minimize completion time of data-parallel jobs. 
Gandiva~\cite{xiao2018gandiva} uses scaling heuristics for GPU-sharing across multiple jobs. 
GADGET~\cite{yu2022gadget} balances communication overhead and contention for resource scheduling for RAR jobs. 
Tiresias~\cite{gu2019tiresias} prioritizes jobs based on training duration metrics. 
Pollux~\cite{qiao2021pollux} adapts resources to optimize \textit{good-put}, a metric combining throughput and statistical efficiency.
Non-preemptive scheduling research is more limited. 
SPIN~\cite{han2020scheduling} focuses on minimizing makespan for placement-sensitive jobs. 
An online framework in~\cite{wang2020communication} addresses communication contention among DDL jobs. 
An offline approximation algorithm in~\cite{yu2022scheduling} tackles communication overhead and network contention for RAR jobs.
However, all existing methods above only considered a {\em single} parallelism.
By stark contrast, in this work, we propose a {\em non-preemptive} online scheduling algorithm for DDLwMP DDL training jobs 
with theoretical performance guarantees.

It is worth noting that most previous DDL job scheduling works rely on the knowledge of job training duration/iterations (some using predictive techniques based on historical runtimes~\cite{han2020scheduling,gu2019tiresias,peng2018optimus}). 
Abdullah \etal~\cite{faisal2024will} proposed to enhance ML job completion predictability using weighted-fair-queueing for bounded preemption.
However, prioritizing jobs by predicted execution time can lead to inaccurate GPU allocation and long wait times for short jobs.  
Inspired by recent advances in learning-augmented online preemptive scheduling for single machine~\cite{bampis2022scheduling}, we propose an online prediction-assisted algorithm for non-preemptive DDLwMP job scheduling to delay long jobs to expedite shorter ones.

\section{System Model}
\label{sec::sys}

We consider a GPU cluster consisting of $M$ inter-connected homogeneous GPU servers. Each server $m\in [M]$\footnote{We use $[X]$ to denote the set $\{1,2,\ldots, X\}$.} is equipped with $g$ GPUs, yielding a total of $G=Mg$ GPUs within the cluster. The bidirectional (\ie, incoming and outgoing) NIC bandwidth on each machine is denoted as $B_{\texttt{inter}}$.
The intra-server bidirectional GPU communication bandwidth (\eg, PCIe, and NVLink~\cite{nvlink}) is denoted as $B_{\texttt{intra}}$, which is typically one to two orders of magnitude greater than $B_{\texttt{inter}}$.
The system works in a time-slotted fashion, over a potentially large span of $T$ time-slots.
There are $I$ DDLwMP jobs in total in the cluster, and job $i\in[I]$ is submitted at time $r_i\in [T]$.
We note that our proposed scheduling algorithm for DDLwMP jobs also includes single-GPU jobs as a special case, thus offering general support for all DDL workloads.
In what follows, we zoom into two key components in our system modeling.

\subsection{Workload Scheduling for DDL Jobs in GPU Cluster}

In our DDLwMP training setting, each job $i \in [I]$ requests to train a DNN model $\mathcal{D}_i$ for $n_i$ iterations using a specific distributed configuration. 
$\mathcal{D}_i$ is divided into $S_i$ stages, each of which consists of some consecutive DNN layers. 
For improved training efficiency, stages can further be replicated across multiple GPUs in a data-parallel fashion~\cite{harlap2018pipedream, fan2021dapple}, allowing varying degrees of data-parallelism across different stages.
The processing of a single mini-batch by a stage is distributed over the GPUs.
Let $k_{i,s}$ denote the number of data-parallel replicas for stage $s \in [S_i]$ of job $i$, which equals the required GPUs for this stage. 
Thus, the total GPUs needed for job $i$ is $g_i = \sum_{s \in [S_i]} k_{i,s}$. 
A single-GPU job is a special case with one non-replicated stage.
Our distributed training configuration covers the following parallelisms as special cases: 1) data parallelism (single-stage, multiple replicas), 2) model parallelism (multiple non-replicated stages), and 3) pipeline parallelism (other cases). We assume parallel configurations are given through pipeline planning~\cite{tarnawski2021piper, luo2022efficient}.


On a given GPU, the time required for the FP (resp. BP) of a mini-batch over a replica of stage $s$ for job $i$ is denoted by $p^f_{i,s}$ (resp. $p^b_{i,s}$). 
The incoming and outgoing data size (\ie, activations during FP and gradients during BP) for each training iteration per replica of stage $s$ in job $i$ are denoted by $d_{i,s}^{in}$ and $d_{i,s}^{out}$ respectively. 
We use $h_{i,s}$ to represent the size of trainable parameters for job $i$ and stage $s$.

We use $x^m_{i,s}$ to represent the number of GPUs allocated on server $m$ to host stage $s$ of job $i$, and use $t_i$ to denote the starting time of job $i$. 
Accordingly, an amount of $x^m_{i,s}/g$ bandwidth for the stage is reserved at the incoming and outgoing NIC.
Let $\alpha_i(\{x^m_{i,s}\})$ represent the per-iteration training time of job $i$ given its GPU allocation $\{x^m_{i,s}\}$, which will often be simplified as $\alpha_i$ for notational simplicity henceforth if no confusion arises from the context. 
The characterization of $\alpha_i$ will be specified later in this section.
To ensure schedule feasibility, we have the following constraints:

\vspace{-5mm}
{\small
\begin{align}
& t_i \geq r_i, \forall i\in [I], \label{eqn:start_time}\\
& \sum\limits_{m\in [M]}x^m_{i,s} = k_{i,s}, \forall i\in [I], s\in [S_i], \label{eqn:feasible_placement}\\
& \sum\limits_{i\in [I]: t_i \leq t \leq t_i + n_i \alpha_i}\sum\limits_{s\in [S_i]}x^m_{i,s} \leq g, \forall m\in [M], t\in[T].\label{eqn:server_capacity}
\end{align}
}

Here, Constraint~(\ref{eqn:start_time}) ensures that each job is scheduled to start only after its submission; Constraint~(\ref{eqn:feasible_placement}) implies that all stage replicas of job $i$ are allocated in the cluster;
and Constraint~(\ref{eqn:server_capacity}) guarantees that the allocated GPUs for active jobs do not exceed each server’s capacity limit.

\subsection{Characterization of Per-Iteration Training Time $\alpha_i$}

As mentioned in Section~\ref{sec::related_work}, we focus on the widely adopted asynchronous pipeline parallel training~\cite{harlap2018pipedream,narayanan2021memory}.
We note that our design can be straightforwardly extended to synchronous pipeline parallelism~\cite{huang2019gpipe} by following the analytic model proposed in~\cite{luo2022efficient} for $\alpha_i$.
{Under asynchronous pipeline parallelism, as the execution of all stages is fully pipelined, the per-iteration training time is the maximum per-stage computation-communication time of a single stage (\ie, the bottleneck stage)~\cite{harlap2018pipedream, tarnawski2021piper}.}
{We use $\beta^m_{i,s}$ to denote the per-iteration training time of stage $s$ of job $i$ on machine $m$, which consists of the computation time for the current batch of the stage replicas on server $m$ in one iteration (denoted as $comp^m_{i,s}$), the data communication time for sending activations and gradients of the current batch into and out of the stage (denoted as $comm^m_{i,s}$), and the communication costs for synchronizing parameters among all the stage replicas using AllReduce operations ($AllReduce^m_{i,s}$).} 
The communication time (including both the FP and the BP) can be calculated as follows:

\vspace{-3mm}
{\small
\begin{equation}
	comp^m_{i,s} = 
	\begin{cases}
			p^f_{i,s} + p^b_{i,s}, & x^m_{i,s} > 0,\\
			0, &  x^m_{i,s} = 0.
	\end{cases}
\end{equation}
}
\vspace{-3mm}

To compute the inter-stage communication time when stage $s-1$ and/or $s$ are replicated over multiple GPUs, we evenly distribute the data being transmitted across inter-stage links. 
Thus, the per-iteration data communication time between each replica of stage $s-1$ and $s$ is $\frac{2d^{out}_{i,s-1}}{k_{i,s}}= \frac{2d^{in}_{i,s}}{k_{i,s-1}}$~\cite{luo2022efficient}. 
Hence, for stage $s\in [2, 3, \ldots, S_i-1]$, if $x^m_{i,s} > 0$, we have:

\vspace{-3mm}
{\small
\begin{multline}
	comm^m_{i,s} \!=\! \frac{(2d^{in}_{i,s}\frac{k_{i,s-1} - x^m_{i,s-1}}{k_{i,s-1}} + 2d^{out}_{i,s}\frac{k_{i,s+1} - x^m_{i,s+1}}{k_{i,s+1}})x^m_{i,s}}{(x^m_{i,s}/g) B_{\texttt{inter}}} \\ + \frac{2d^{in}_{i,s}\frac{x^m_{i,s-1}}{k_{i,s-1}} + 2d^{out}_{i,s}\frac{x^m_{i,s+1}}{k_{i,s+1}}}{B_{\texttt{intra}}},
\end{multline}
}

and $comm^m_{i,s}=0$ otherwise. The term $comm^m_{i,s}$ for the first and last stages can be calculated similarly.
{The data size communicated for each stage replica of stage $s$ in the AllReduce operation can be calculated as $\frac{2(k_{i,s}-1)}{k_{i,s}}h_{i,s}$~\cite{allreduce_time} for both RAR and TAR}, and the data communication time is bottlenecked by the minimum bandwidth between stage replicas. Hence, the time taken by the AllReduce operation for job $i$ stage $s$ is:

\vspace{-1mm}
{\small
\begin{equation}
	AllReduce^m_{i,s}= 
		\begin{cases}
			\frac{2(k_{i,s}-1)h_{i,s}}{k_{i,s}\frac{x^m_{i,s}}{g} B_{\texttt{inter}}}, & \text{if } x^m_{i,s} < k_{i,s}, \\
			\frac{2(k_{i,s}-1)h_{i,s}}{k_{i,s}B_{\texttt{intra}}}, & \text{if }  x^m_{i,s} = k_{i,s}.
		\end{cases}
\end{equation}
}

Here, in the first case, the bottleneck is due to the server NIC bandwidth, while in the second case, all data communication is conducted via the intra-server connection. 
{Lastly, by putting all things together and in line with existing formulations on pipeline scheduling~\cite{fan2021dapple,tarnawski2021piper,luo2022efficient}, we obtain the per-iteration training time $\alpha_i$ for processing a single batch as follows:}

\vspace{-.1in}
{\small
\begin{multline}
\alpha_i = \max_{m\in \mathcal{M}, s\in [S_i]}\beta^m_{i,s} \\
= \max_{m\in \mathcal{M}, s\in [S_i]} (comp^m_{i,s} + comm^m_{i,s} + AllReduce^m_{i,s}).
\label{eqn:periteration_time}
\end{multline}
}

{Additionally, some distributed communication engines (\eg, BytePS~\cite{peng2019generic}) enable strategic overlapping of AllReduce operations with backward computation. For example, gradients for layer $l$ can be synchronized using AllReduce while simultaneously computing gradients for layer $l-1$. To account for this overlapping, one can apply model-dependent coefficients to the backward computation time and AllReduce time~\cite{yu2021sum}.}

Let $\alpha_i^{\max}$ and $\alpha_i^{\min}$ denote the maximum and minimum per-iteration training times of job $i$ given a GPU assignment, respectively.
$\alpha_i^{\max}$ can be computed using Eq.~(\ref{eqn:periteration_time}) if the job is assigned to $g_i$ servers, with each server holding a single-stage replica and assigned a bandwidth of $1/g\times B_{\texttt{inter}}$.
However, evaluating $\alpha_i^{\min}$ for each job requires searching through an exponential number of possible GPU assignments, which is computationally intractable. To address this challenge, we will propose an estimation strategy to be described in Sec.~\ref{subsec:heavgedge}.

\subsection{The Online DDLwMP Job Scheduling Problem}

In this paper, our goal is to minimize the total DDLwMP job completion in a time horizon of length $T$, which can be evaluated as $\sum_{i\in [I]}(t_i + n_i\alpha_i)$. Putting all modeling together, we can formulate our DDLwMP job scheduling problem as:

\vspace{-.1in}
{\small
\begin{align}
\label{eqn:online_scheduling_problem}\text{ Minimize }
      & \sum\limits_{i\in [I]}(t_i + n_i\alpha_i) \\
    \text{subject to } 
    & \text{(\ref{eqn:start_time})--(\ref{eqn:server_capacity})} , x^m_{i,s} \in \mathbb{\mathbb{N}}, \forall m\in [M], i\in [I], s\in [S_i], \nonumber\\ 
    &t_i\in [T], \forall i\in [I]. \nonumber
\vspace{-.1in}
\end{align}
}
{We note that Problem~(\ref{eqn:online_scheduling_problem}) is an integer non-convex program due to the intricate modeling of the per-iteration training time $\alpha_i$.
Moreover, another key challenge in Problem~\eqref{eqn:online_scheduling_problem} stems from the uncertain job submission time $r_i$ and the unknown number of job training iterations $n_i$, which necessitates online algorithmic designs.
In fact, the offline variant of Problem~(\ref{eqn:online_scheduling_problem}), where $r_i$, $n_i$ and $\alpha_i$ are all predetermined (rendering the problem of scheduling parallelizable tasks~\cite{turek1992approximate}) is NP-hard.} 
To address these challenges, we propose a prediction-assisted algorithm for optimizing the online DDLwMP job scheduling in GPU clusters.
\section{Prediction-Assisted Online Job Scheduling Algorithm}
\label{sec::schedule}

\subsection{Basic Idea}

The complexities of Problem~(\ref{eqn:online_scheduling_problem}) arise from two distinct perspectives: 1) The sensitivity of DDLwMP jobs to GPU placement (the per-iteration training time, can significantly vary with different placements); and 2) the inherent online nature of the problem (not only are the job arrival times unknown, but the actual number of job execution iterations is also typically uncertain in practice).

To address these unique challenges, we introduce a new online scheduling algorithm named adaptive shortest-remaining-processing-time-first (\ouralg{}) to solve Problem~\eqref{eqn:online_scheduling_problem} based on the following key observations:
First, we note that the complex computation-communication structure of DDLwMP jobs can be effectively modeled using graphs. This realization leads us to develop a strategic graph partitioning algorithm called \heavyedge{}. This algorithm favors co-locating replicas with substantial communication requirements, thereby enhancing overall scheduling efficiency.

{Utilizing \heavyedge{} for job placement, we propose an online job scheduling framework for DDLwMP jobs with a theoretical competitive ratio guarantee. This framework is inspired by the proven optimality of preemptive Shortest Remaining Processing Time (SRPT) scheduling for jobs based on their predicted durations on a single machine~\cite{bampis2022scheduling}. In our approach, we construct a single-machine preemptive scheduling instance based on the original non-preemptive scheduling problem. This construction considers the size of each job and its predicted number of training iterations.} 

{We then apply SRPT to preemptively schedule these jobs within this hypothetical single-machine instance. 
The results obtained from this single-machine scheduling model are then used to guide the non-preemptive job allocation in the actual cluster environment.
In this way, jobs with larger predicted workloads are scheduled later, creating space for potentially future smaller jobs to be scheduled first, thus reducing the total job completion time.
}

\subsection{The \heavyedge{} GPU Mapping Algorithm} \label{subsec:heavgedge} 

In our \heavyedge{} GPU mapping algorithm, each job $i$ is assigned to a set of servers $\mathcal{M}_i$ for execution. 
Each server $m \in \mathcal{M}_i$ has $g_m$ available GPUs to host job $i$'s stage replicas, such that $\sum_{m\in \mathcal{M}_i}g_m = g_i$ ($g_m \leq g$ as some GPUs in the server may be occupied by existing jobs). 
We now map each stage replica of job $i$ to a GPU, with the goal to reduce inter-server communication to improve job training efficiency.


Toward this end, we model each job $i$ as a graph $\Omega = (\mathcal{V}, \mathcal{E})$, where vertices $\mathcal{V}$ represent stage replicas and edges $\mathcal{E}$ denote data communication, with edge weights indicating communication data size. 
For inter-stage communication between stages $s-1$ and $s$, we assign edges with weight $\frac{2d^{out}_{i,s-1}}{k_{i,s}}= \frac{2d^{in}_{i,s}}{k_{i,s-1}}$ for each replica pair. 
For intra-stage communication (AllReduce) in stage $s$, we handle RAR and TAR differently. 
In RAR, replicas form a ring with edges weighted $\frac{2(k_{i,s}-1)}{k_{i,s}}h_{i,s}$. 
For TAR, edges connect replica pairs linked in double binary trees, weighted $\frac{(k_{i,s}-1)}{k_{i,s}}h_{i,s}$, which is halved compared to RAR. 
This reduction is due to the structure of the double binary trees, where each tree processes half of the total data~\cite{nccl}.

\begin{figure}[!t]
  \centering
  \includegraphics[width=0.65\columnwidth]{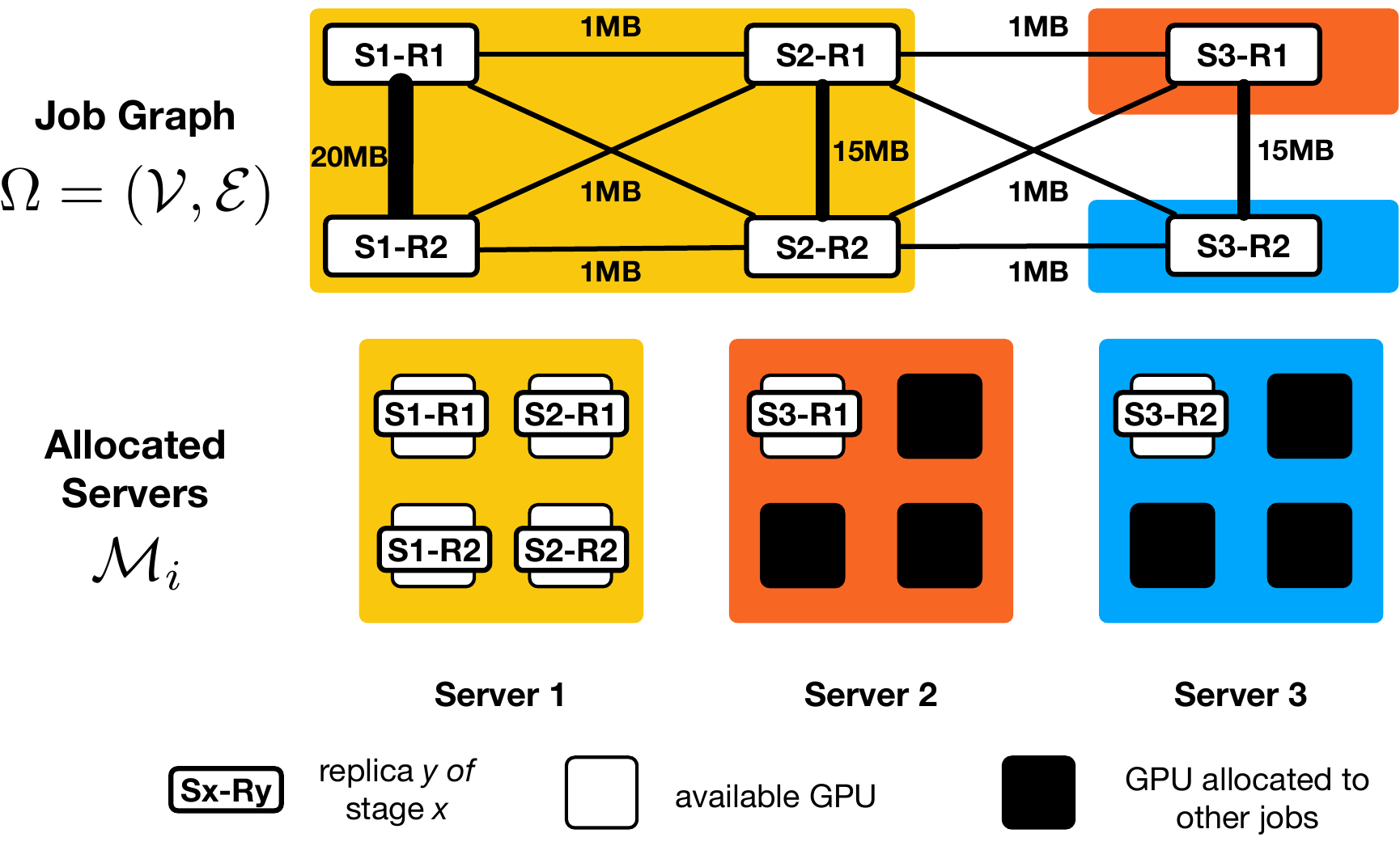}
  \caption{GPU mapping: An illustrative example.}
  \label{fig:gpu_mapping}
\end{figure}

As a result, the GPU mapping problem is equivalent to a graph cut problem that partitions a graph into $|\mathcal{M}_i|$ subgraphs of size $g_m$ to minimize inter-server communication (total cut weight among subgraphs) and maximize intra-server communication (total edge weights within subgraphs).
Fig.~\ref{fig:gpu_mapping} illustrates an example of GPU mapping. 
The job consists of three stages, each with two replicas. 
The job is assigned to three servers with four, one, and one available GPU(s), respectively. 
We partition the job graph into three subgraphs, each corresponding to the set of GPUs in a server with the same color.
Unfortunately, this graph partitioning problem is an NP-complete balanced graph cut problem~\cite{andreev2004balanced} even with equal GPUs per server, and not to mention with varying GPU availability. 
To address this challenge, we propose the \heavyedge{} approach, which greedily assigns heavily connected stage replicas to servers as follows.
In \heavyedge{}, we start by sorting the servers in $|\mathcal{M}_i|$ based on the available GPU numbers in a descending order, denoted as $\{m_1, m_2, \ldots, m_{\mathcal{M}_i}\}$.
Vertices in $\mathcal{V}$ (i.e., stage replicas) are then assigned to these servers from $m_1$ to $m_{\mathcal{M}_i}$.
We denote the current server for assignment as $m$ and use \texttt{node\_set} to denote the set of vertices assigned to $m$, which is initialized as $\emptyset$. 
Next, we consider two cases: 1) if $|\mathcal{V}|$ equals $m$'s GPU count, all replicas are assigned to it; 2) for single-GPU servers, we assign the vertex with the minimum total edge weight.
In the case of a server with multiple GPUs and there are remaining vertices, the GPU mapping process follows the ``\heavyedge{}'' principle: we iteratively add vertices to \texttt{node\_set} by finding the heaviest edge between assigned and unassigned vertices, prioritizing intra-server communication efficiency.
If no connecting edge exists, we randomly assign an unassigned vertex. This process continues until \texttt{node\_set} matches $m$'s number of available GPUs.

We use an example in Fig.~\ref{fig:gpu_mapping} to further illustrate our \heavyedge{} GPU mapping algorithm. The process begins by identifying the heaviest communication edge, \((\texttt{S1-R1}, \texttt{S1-R2})\), with a data size of 20MB, and assigning these nodes to \texttt{node\_set}, \ie, the first server. To optimize communication efficiency, we then allocate unassigned nodes directly connected to this pair (\ie, \texttt{S2-R1} and \texttt{S2-R2}), each with a 1MB connection to \texttt{S1-R1} and \texttt{S1-R2} respectively, to the same server, maximizing intra-server communication. This process continues sequentially for subsequent servers until all nodes are assigned, effectively minimizing inter-server communication overhead.


With the \heavyedge{} GPU mapping algorithm, we obtain the minimum achievable per-iteration training time $\tilde{\alpha}_i^{\min}$ for jobs, helping predict job training times. 
To minimize per-iteration time, each job is allocated to the fewest servers possible, utilizing the maximum number of interconnected high-bandwidth GPUs. 
For job $i$, a set of machines $\mathcal{M}_i$ is assigned, where servers $m_1$ to $m_{|\mathcal{M}_i|-1}$ contribute all $g$ GPUs, and the last server $m_{|\mathcal{M}_i|}$ contributes $g' \leq g$ GPUs. \heavyedge{} determines the GPU mapping, and $\tilde{\alpha}_i^{\min}$ is estimated using~(\ref{eqn:periteration_time}).

\subsection{The A-SRPT Online DDLwMP Job Scheduling Algorithm}

\textbf{1) Adaptive Shortest-Remaining-Processing-Time-First:}
{Our online scheduling algorithm is inspired by the online SRPT framework proposed in~\cite{chekuri2001approximation}, which is optimal for online scheduling for single-machine jobs with known durations over parallel machines. 
However, our problem is far more complex due to two critical aspects: 
1) Each job in our setting can span {\em multiple} GPUs, inducing complex inter-job communication patterns; 
2) The actual number of training iterations of jobs in our setting becomes known only upon job completion.}
Assume that we have a prediction model that predicts the number of training iterations $\tilde{n}_i$ for each training job $i$. 
We define the prediction error for job $i$, denoted by $\epsilon_i$, as the total absolute difference between the predicted and actual numbers of training iterations, \ie, $|n_i - \tilde{n}_i|$. 
Let $\epsilon$ and $\bar{\epsilon}$ denote the total prediction and average prediction errors, respectively, which can be computed as:

\vspace{-2mm}
{\small
\begin{equation}
	\epsilon = \sum\limits_{i\in[I]}\epsilon_i = \sum\limits_{i\in[I]}|n_i - \tilde{n}_i|, \quad \text{and} \quad \bar{\epsilon} = \frac{\epsilon}{I}.
	\label{eqn:error}
\end{equation}}
\vspace{-2mm}

{Our proposed design adopts the Shortest Remaining Processing Time (SRPT) strategy, which prioritizes available jobs with the least processing time. This approach is known to be delay-optimal in single-machine preemptive settings~\cite{lawler1993sequencing} and has been proven competitive even when job processing times are unknown until completion but can be estimated~\cite{bampis2022scheduling}.}

\begin{figure}[t!]
  \centering
  \includegraphics[width=0.8\columnwidth]{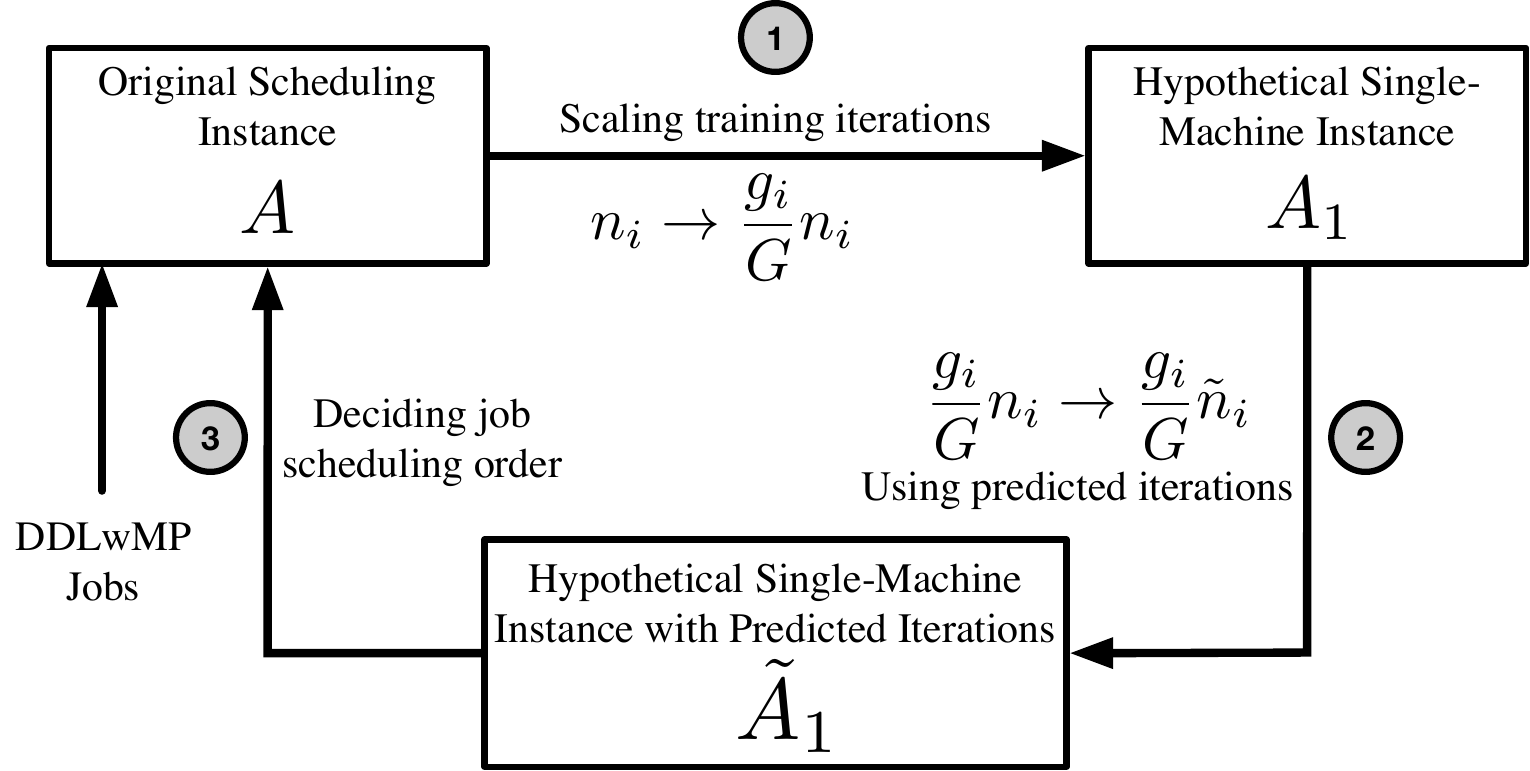}
  \caption{Algorithmic idea overview.}
  \label{fig:alg_overview}
\end{figure}

{We present an overview of our algorithmic idea in Fig.~\ref{fig:alg_overview}.}
{
We ``virtualize'' the entire GPU cluster as a `single machine' and proportionally scale down each job's workload (\circled{1}). 
Specifically, let instance $A$ denote the original online DDLwMP scheduling problem. 
We then define a new hypothetical single-machine preemptive online scheduling problem $A_1$, sharing $A$'s job set. 
In $A_1$, the number of training iterations for job $i$ is scaled to $\frac{g_i}{G}n_i$, while the arrival time $r_i$ is kept unchanged. 
As the actual per-iteration training time $\alpha_i$ of a job can only be obtained after placement, to estimate the job's GPU requirements and its minimum attainable per-iteration training time, we optimistically employ the minimum per-iteration training time $\tilde{\alpha}_i^{\min}$, which is determined in the previous section. Thus, the job duration in instance $A_1$ is calculated as $\frac{g_i}{G}n_i\tilde{\alpha}_i^{\min}$.
Furthermore, since the actual number of job training iterations $n_i$ is unknown at the time of scheduling, we introduce another instance, $\tilde{A}_1$. This instance substitutes $A_1$'s training iteration number, $\frac{g_i}{G}n_i$, with the predicted value, $\frac{g_i}{G}\tilde{n}_i$ (\circled{2}). Consequently, the predicted job duration in $\tilde{A}_1$ is represented as $\frac{g_i}{G}\tilde{n}_i\tilde{\alpha}_i^{\min}$.
We schedule all jobs in $\tilde{A}_1$ first, and order jobs according to their completion time in $\tilde{A}_1$. We then perform job scheduling on the actual cluster following the order (\circled{3}).
By doing so, jobs with larger predicted workloads $\frac{g_i}{G}\tilde{n}_i\tilde{\alpha}^{\min}_i$ are scheduled later due to longer completion times in $\tilde{A}_1$. 
Therefore, the goal of \ouralg{} is to create space for potentially future smaller jobs to be scheduled first, thus reducing the total job completion time.}



\begin{algorithm}[t!]
\DontPrintSemicolon
\SetNoFillComment
\footnotesize
\KwIn{$I, \{S_i\}, \{k_{i,s}\}, g_i, \{p^f_{i,s}, p^b_{i,s}\}, \{d_{i,s}^{in}, d_{i,s}^{out}\}, \{h_{i,s}\}, $\\ $M, g, B_{\texttt{inter}}, B_{\texttt{intra}}$}
\KwOut{$\{t_i, \{x_{i,s}^m\}\}_{i\in [I]}$}
	\While{$t \leq T$}{
		Append completed jobs in $\tilde{A}_1$ using SRPT to \texttt{pending\_queue}\;
		\While {$\texttt{pending\_queue}$ is not empty}{
			$i\leftarrow$ head of \texttt{pending\_queue}\;
			\If {$g_i \leq$ available number of GPUs in the cluster} {
				$\mathcal{M}_i \leftarrow \emptyset$\;
				Pop $i$ from \texttt{pending\_queue}\;
				\If {$\alpha_i^{\max}/\tilde{\alpha}_i^{\min} \geq $ \texttt{COMM\_HEAVY}} {
					Select $g_i$ GPUs from servers with most available GPUs; $\mathcal{M}_i \leftarrow$ these servers\;
					$\{x_{i,s}^m\}\leftarrow$ \heavyedge{}($i, \mathcal{M}_i$)\; 
					Calculate $\alpha_i(\{x_{i,s}^m\})$ using~(\ref{eqn:periteration_time})\;
					\If {$\alpha_i(\{x_{i,s}^m\})/\tilde{\alpha}_i^{\min} \leq $ \texttt{COMM\_HEAVY}} {
						$t_i\leftarrow t$\;
					}
					\Else {
						$\kappa \leftarrow \alpha_i(\{x_{i,s}^m\})$\;
						\For {$t \in \{t+1, \ldots, t+\tau\frac{g_i}{G}\tilde{n}_i\tilde{\alpha}_i^{\min}\}$} {
							Calculate $\{x_{i,s}^m\}$ and $\alpha_i$ based on current server availability\;
							\If {$\alpha_i < \kappa$} {
								$t_i \leftarrow t$; \Break\;
							}
						}
						$t_i \leftarrow t$\;
					}
					
				}
				\Else { 
					Select $g_i$ GPUs from servers with least available GPUs; $\mathcal{M}_i \leftarrow$ these servers\;				
					$\{x_{i,s}^m\}\leftarrow$ \heavyedge{}($i, \mathcal{M}_i$); $t_i\leftarrow t$\;
			}
		}
		\Else {
			$t\leftarrow t+1$\;
		}
		}
	}
	\Return $\{t_i, \{x_{i,s}^m\}\}_{i\in [I]}$\;
\caption{The \ouralg{} Algorithm.}
\label{alg:ouralg}
\end{algorithm}
\setlength{\textfloatsep}{0pt}

{Our \ouralg{} algorithm is detailed in Algorithm~\ref{alg:ouralg}.
The job completion order in $\tilde{A}_1$ is maintained in \texttt{pending\_queue}.}
Let $i$ denote the current head of \texttt{pending\_queue}, \ie, the job to be scheduled. 
If the number of GPUs required by job $i$ (\ie, $g_i$) is less than or equal to the available number of GPUs in the cluster, job $i$ can be scheduled (Line~5), and removed from \texttt{pending\_queue} (Line~7). 
Otherwise, we proceed to the next time-slot (Line~25).

To further improve resource utilization, we classify jobs as either ``communication-heavy'' or ``non-communication-heavy,'' thereby tailoring the scheduling policy to each job's communication pattern.
The {\em rationale} behind this strategy is that communication-heavy jobs have per-iteration training times highly sensitive to GPU mapping due to large communication data sizes, making their worst-case training time $\alpha_i^{\max}$ (with inter-server bandwidth $B_{\texttt{inter}}$) much higher than when allocated to the fewest possible servers.
Jobs are classified by the ratio $\alpha_i^{\max}/\tilde{\alpha}_i^{\min}$. If this ratio exceeds \texttt{COMM\_HEAVY} (1.5 in our experiments), the job is communication-heavy; otherwise, it is non-communication-heavy. Communication-heavy jobs are delayed until sufficient server resources are available, while non-communication-heavy jobs are initiated immediately to maintain workflow efficiency.

For communication-heavy jobs, we prioritize server consolidation (Lines~8--20). We select servers based on maximum availability and calculate $\alpha_i({x_{i,s}^m})$. If $\alpha_i({x_{i,s}^m})/\tilde{\alpha}_i^{\min} \leq \texttt{COMM\_HEAVY}$, we schedule immediately. Otherwise, we delay up to $\tau\frac{g_i}{G}\tilde{n}_i\tilde{\alpha}_i^{\min}$, constantly reassessing allocations for a more efficient configuration, \ie, a lower $\alpha_i$.

For non-communication-heavy jobs, we prioritize immediate execution using a fragmentation-aware strategy (Lines~21–23). Since their per-iteration training times are less affected by placement, we allocate them to servers with lower availability, reserving higher-availability servers for communication-heavy jobs. We then use the \heavyedge{} algorithm for GPU mapping and promptly initiate the job.

\smallskip
\textbf{2) Theoretical Performance Analysis:} 
Let $\Gamma_A$ denote the total job completion time achieved by \ouralg{} for the GPU cluster scheduling problem $A$, and let \opt{} represent the true optimal job completion time. 
Also, let \optAone{} and \optAonepred{} denote the total job completion times of the SRPT-based schedules for instances $A_{1}$ and $\tilde{A}_{1}$, respectively. 

\begin{lemma}
	$\textit{\optAone}  \leq \rho\textit{OPT}_A$, where $\rho = \max_{i\in[I]}\frac{\alpha_i^{\max}}{\alpha_i^{\min}}$.
	\label{lemma:single_multi_server}
\end{lemma}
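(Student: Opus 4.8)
The plan is to exhibit a single \emph{feasible} preemptive schedule for the single-machine instance $A_1$ whose total completion time is at most $\rho\,\textit{OPT}_A$; since SRPT is optimal for single-machine preemptive total-completion-time minimization with release dates~\cite{lawler1993sequencing}, $\textit{OPT}_{A_1}$ can only be smaller, which yields the claim. I would start from an optimal schedule $\sigma^\star$ for the original cluster instance $A$, writing $t_i^\star$ and $C_i^\star = t_i^\star + n_i\alpha_i^{\star}$ for the start and completion times of job $i$ (where $\alpha_i^\star$ is the per-iteration time realized by $\sigma^\star$), so that $\textit{OPT}_A = \sum_{i\in[I]} C_i^\star$.

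The core construction is a ``fractional shadowing'' of $\sigma^\star$ on the single machine: whenever $\sigma^\star$ is running job $i$ (occupying $g_i$ of the $G$ GPUs), let the single machine devote a fraction $g_i/G$ of its unit capacity to job $i$. The server-capacity constraint~\eqref{eqn:server_capacity} guarantees that at most $G$ GPUs are busy at any instant, so the total rate committed on the single machine is at most $1$ and the schedule is feasible; release dates are respected automatically because job $i$ receives service only during $[t_i^\star,C_i^\star]\subseteq[r_i,T]$. Over this window job $i$ accrues $\frac{g_i}{G}(C_i^\star-t_i^\star)=\frac{g_i}{G}n_i\alpha_i^\star$ units of single-machine work, which is at least $\frac{g_i}{G}n_i\alpha_i^{\min}$ since $\alpha_i^\star\ge\alpha_i^{\min}$ by definition of the minimum per-iteration time.

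The factor $\rho$ enters precisely here: the work \emph{required} of job $i$ in $A_1$ is $\frac{g_i}{G}n_i\tilde\alpha_i^{\min}$, and because the \heavyedge{} estimate corresponds to an actual placement we have $\tilde\alpha_i^{\min}\le\alpha_i^{\max}\le\rho\,\alpha_i^{\min}$, so the required volume is at most $\rho$ times the work the shadow schedule already delivers by time $C_i^\star$. I would therefore dilate the time axis of the shadow schedule by the factor $\rho\ge1$: rates stay bounded by $1$ (hence feasibility is preserved) and release dates remain satisfied (since $\rho t_i^\star\ge t_i^\star\ge r_i$), while every job now receives $\rho$ times as much work by dilated time $\rho C_i^\star$, which covers its required volume $\frac{g_i}{G}n_i\tilde\alpha_i^{\min}$. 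Thus every job completes by $\rho C_i^\star$ in this feasible single-machine schedule, and summing over $i$ gives total completion time at most $\rho\sum_i C_i^\star=\rho\,\textit{OPT}_A$; optimality of SRPT then finishes the proof.

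I expect the main obstacle to be the interaction between release dates and the estimate gap. Handling them naively (\eg, scheduling the jobs in $\sigma^\star$'s completion order and idling until releases) only yields a bound of the form $(1+\rho)\,\textit{OPT}_A$, because a late-released job could in principle sit behind a large backlog; the improvement to exactly $\rho$ relies on the coupling between a job's required volume $\frac{g_i}{G}n_i\tilde\alpha_i^{\min}$ and the amount of cluster work $\sigma^\star$ must actually perform for it, which the time-dilation argument captures cleanly. A secondary point to verify carefully is the inequality $\tilde\alpha_i^{\min}\le\alpha_i^{\max}$, \ie, that the \heavyedge{} placement is never worse than the maximally-spread configuration defining $\alpha_i^{\max}$.
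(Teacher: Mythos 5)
Your proposal is correct and follows essentially the same argument as the paper's proof: both construct a feasible preemptive single-machine schedule from the optimal cluster schedule by assigning each running job a capacity fraction $g_i/G$ (feasible by the server-capacity constraint), stretch the time axis by $\rho$ using the inequality $\tilde{\alpha}_i^{\min}\le\alpha_i^{\max}\le\rho\,\alpha_i^{\min}$ to cover each job's required volume $\frac{g_i}{G}n_i\tilde{\alpha}_i^{\min}$ by time $\rho\,C_i^\star$, and then invoke optimality of SRPT on the single machine. The only difference is the order of the two steps (the paper scales the cluster schedule first and then fractionalizes, while you fractionalize first and then dilate), which is immaterial.
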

The detailed proof is given in Appendix~\ref{proof:lemma:single_multi_server}.


\begin{lemma} 
$\Gamma_A$ is no larger than 
{\small
\begin{align*}
(1 +\tau + \frac{\rho G}{G-g^{\max}})\textit{OPT}_{\tilde{A}_1} + I\frac{g^{\max}\alpha^{\max}}{G-g^{\max}}\epsilon + \rho{OPT_A},
\end{align*}}
where $g^{\max} = \max_{i\in[I]}g_i$, and $\alpha^{\max} = \max_{i\in[I]}\alpha^{\max}_i$
	\label{lemma:single_prediction_multi_server}
\end{lemma}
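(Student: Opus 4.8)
The plan is to split the objective as $\Gamma_A=\sum_{i\in[I]} C_i=\sum_i t_i+\sum_i n_i\alpha_i$, where $C_i=t_i+n_i\alpha_i$ is the actual completion time, and to bound the two sums separately. Throughout I index the jobs $1,\dots,I$ in the order \ouralg{} dispatches them, which is exactly the completion order in $\tilde{A}_1$; I write $\tilde{C}_i$ for the completion time of job $i$ in the $\tilde{A}_1$ schedule (so $\tilde{C}_1\le\cdots\le\tilde{C}_I$) and $\tilde{p}_i=\frac{g_i}{G}\tilde{n}_i\tilde{\alpha}_i^{\min}$ for its scaled processing time. The processing-time sum is the easy part: by definition of $\rho$ we have $\alpha_i\le\alpha_i^{\max}\le\rho\,\alpha_i^{\min}$, while any feasible schedule for $A$ (hence the optimum) runs job $i$ for at least $n_i\alpha_i^{\min}$, so $\textit{OPT}_A\ge\sum_i n_i\alpha_i^{\min}$. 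Combining, $\sum_i n_i\alpha_i\le\rho\sum_i n_i\alpha_i^{\min}\le\rho\,\textit{OPT}_A$, which produces the $\rho\,\textit{OPT}_A$ term of the bound.

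The core of the proof is bounding each start time $t_i$. Since job $i$ enters the pending queue only at its $\tilde{A}_1$-completion time and \ouralg{} dispatches strictly in queue order, we have $t_i\ge\tilde{C}_i$, and during the whole interval $[\tilde{C}_i,t_i)$ at every instant either the cluster is more than $G-g^{\max}$ GPUs busy (whenever job $i$ or some earlier job is blocked for want of GPUs) or a discretionary communication-heavy search is underway. All GPUs occupied during $[\tilde{C}_i,t_i)$ are held by jobs preceding $i$, whose total GPU-time is $\sum_{j<i} g_j n_j\alpha_j$, and the discretionary search for job $i$ itself lasts at most $\tau\,\tilde{p}_i$ (the loop of Lines~17--19). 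A standard area (volume) argument then yields
\[ t_i \;\le\; \tilde{C}_i + \frac{\sum_{j<i} g_j n_j\alpha_j}{G-g^{\max}} + \tau\,\tilde{p}_i. \]

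Summing this over $i$ and converting actual workloads into predicted ones term by term gives the remaining three terms. Using $\alpha_j^{\max}\le\rho\,\tilde{\alpha}_j^{\min}$ (valid because $\tilde{\alpha}_j^{\min}\ge\alpha_j^{\min}$) together with $n_j\le\tilde{n}_j+\epsilon_j$, I bound $g_j n_j\alpha_j\le\rho G\,\tilde{p}_j+g^{\max}\alpha^{\max}\epsilon_j$. The single-machine non-idling property of SRPT gives $\sum_{j\le i}\tilde{p}_j\le\tilde{C}_i$ (idle periods and release times only strengthen this), so $\sum_i\sum_{j<i}\rho G\,\tilde{p}_j\le\rho G\sum_i\tilde{C}_i=\rho G\,\textit{OPT}_{\tilde{A}_1}$, and the error part telescopes as $\sum_i\sum_{j<i}g^{\max}\alpha^{\max}\epsilon_j=g^{\max}\alpha^{\max}\sum_j(I-j)\epsilon_j\le I g^{\max}\alpha^{\max}\epsilon$. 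Dividing by $G-g^{\max}$ delivers the $\frac{\rho G}{G-g^{\max}}\textit{OPT}_{\tilde{A}_1}$ and $I\frac{g^{\max}\alpha^{\max}}{G-g^{\max}}\epsilon$ terms; the remaining $\sum_i\tilde{C}_i=\textit{OPT}_{\tilde{A}_1}$ supplies the coefficient $1$, and $\tau\sum_i\tilde{p}_i\le\tau\,\textit{OPT}_{\tilde{A}_1}$ (total work never exceeds total completion time) supplies the coefficient $\tau$. Adding the processing-time bound from the first paragraph closes the argument.

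The main obstacle is making the per-job start-time inequality rigorous. Concretely, I must show that the strict priority order, combined with the algorithm's policy of reserving high-availability servers for communication-heavy jobs, forces the cluster to be provably more than $(G-g^{\max})$-busy throughout every contention-induced wait, so that the only non-busy portions of $[\tilde{C}_i,t_i)$ are discretionary searches. The genuinely delicate point is then ensuring each discretionary search is charged \emph{once}, contributing only $\tau\,\tilde{p}_i$ to job $i$'s own bound rather than propagating its delay additively to every later job (which would inflate the $\tau$ coefficient by a factor of $I$); I expect this to require arguing that a search is triggered precisely when the free GPUs are too fragmented, a condition that coincides with high occupancy and hence is absorbed by the busy-area accounting. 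The remaining steps are the routine substitutions above.
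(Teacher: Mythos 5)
Your skeleton is essentially the paper's: both arguments rest on (i) the SRPT lower bound $C_i(\textit{OPT}_{\tilde{A}_1})\ge\sum_{j\le i}\frac{g_j}{G}\tilde{n}_j\tilde{\alpha}_j^{\min}$, (ii) a busy-area argument showing that whenever a queued job is blocked for lack of GPUs, more than $G-g^{\max}$ GPUs are held by earlier jobs, and (iii) the same conversion $g_jn_j\alpha_j\le\rho G\tilde{p}_j+g^{\max}\alpha^{\max}\epsilon_j$ separating predicted workload from prediction error. The gap is exactly the point you flag at the end, and your proposed repair does not work. The interval $[\tilde{C}_i,t_i)$ can contain stretches in which the cluster is \emph{not} $(G-g^{\max})$-busy and job $i$'s own search is \emph{not} running: whenever an earlier job $j<i$ sits at the head of the queue performing its discretionary communication-heavy wait (Lines 17--19 of the algorithm), job $i$ is blocked behind it in the strictly ordered queue, yet the number of free GPUs can be arbitrarily large. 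Your hoped-for rescue --- that a search is only triggered at high occupancy --- is false, because the search is triggered by \emph{fragmentation}, not occupancy: with $g=8$ and every server exactly half-occupied, half the cluster is free (far more than $g^{\max}$), yet a communication-heavy job spanning many servers still sees $\alpha_i(\{x^m_{i,s}\})/\tilde{\alpha}_i^{\min}$ exceed the threshold \texttt{COMM\_HEAVY} and keeps waiting. Hence the non-busy time in $[\tilde{C}_i,t_i)$ must be charged $\tau\tilde{p}_j$ for every $j\le i$, not just $\tau\tilde{p}_i$.

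The good news is that this propagation costs nothing like a factor of $I$, and the tool you need is already in your write-up. Replace your start-time bound by
\[
t_i \;\le\; \tilde{C}_i+\frac{\sum_{j<i}g_jn_j\alpha_j}{G-g^{\max}}+\tau\sum_{j\le i}\tilde{p}_j ,
\]
and invoke your own observation $\sum_{j\le i}\tilde{p}_j\le\tilde{C}_i$ (single-machine SRPT never idles ahead of completed work): the last term becomes $\tau\tilde{C}_i$, and summing over $i$ yields exactly the coefficient $\tau$ on $\textit{OPT}_{\tilde{A}_1}$ claimed in the lemma. This is precisely how the paper proceeds: its makespan bound for the first $i-1$ jobs carries $\sum_{j=1}^{i-1}\tau g_j\tilde{n}_j\tilde{\alpha}_j^{\min}/G$, job $i$ contributes its own $\tau g_i\tilde{n}_i\tilde{\alpha}_i^{\min}/G$, and the combined sum is absorbed into $\tau C_i(\textit{OPT}_{\tilde{A}_1})$ via the SRPT lower bound. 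With this one repair, your remaining substitutions (the $\rho\,\textit{OPT}_A$ term for the jobs' own processing, the error telescoping to $I g^{\max}\alpha^{\max}\epsilon$, the division by $G-g^{\max}$) go through verbatim and reproduce the stated bound.
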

The detailed proof is given in Appendix~\ref{proof:lemma:single_prediction_multi_server}.

\begin{lemma} \label{lemma:single_server_prediction}
$\textit{OPT}_{\tilde{A}_1} \leq \textit{OPT}_{{A_{1}}}  + I\frac{g^{\max}\alpha^{\max}}{G}\epsilon$.
\end{lemma}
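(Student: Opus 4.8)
The plan is to exploit the fact that $A_1$ and $\tilde{A}_1$ are \emph{identical} single-machine preemptive scheduling instances with the same release dates $r_i$, differing only in the per-job processing times. Writing $p_i = \frac{g_i}{G}n_i\tilde{\alpha}_i^{\min}$ for the duration in $A_1$ and $\tilde{p}_i = \frac{g_i}{G}\tilde{n}_i\tilde{\alpha}_i^{\min}$ for the duration in $\tilde{A}_1$, the only discrepancy is
\[
|p_i - \tilde{p}_i| = \frac{g_i}{G}\tilde{\alpha}_i^{\min}\,|n_i - \tilde{n}_i| \le \frac{g^{\max}\alpha^{\max}}{G}\,\epsilon_i,
\]
since $g_i \le g^{\max}$ and $\tilde{\alpha}_i^{\min} \le \alpha_i^{\max} \le \alpha^{\max}$. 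Summing over $i$ and using $\sum_i \epsilon_i = \epsilon$ bounds the \emph{total} processing-time perturbation by $\sum_{i\in[I]} |p_i - \tilde{p}_i| \le \Delta := \frac{g^{\max}\alpha^{\max}}{G}\epsilon$, which is exactly the per-job slack appearing in the target inequality.

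First I would invoke the optimality of SRPT for total completion time on a single preemptive machine: since $\textit{OPT}_{\tilde{A}_1}$ is the minimum total completion time over all feasible schedules for $\tilde{A}_1$, it suffices to exhibit \emph{one} feasible schedule for $\tilde{A}_1$ whose total completion time is at most $\textit{OPT}_{A_1} + I\Delta$. I would build this schedule by modifying the SRPT-optimal schedule $S^\star$ of $A_1$, whose completion times I denote $C_i$ (so $\sum_i C_i = \textit{OPT}_{A_1}$). The objective of the construction is to guarantee that every job finishes no later than $C_i + \Delta$ in the modified schedule; summing the $I$ per-job bounds then yields $\textit{OPT}_{\tilde{A}_1} \le \sum_{i}(C_i + \Delta) = \textit{OPT}_{A_1} + I\Delta$, which is the claim.

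The construction follows $S^\star$ slot by slot and reconciles the two processing-time profiles. For a job with $\tilde{p}_i \le p_i$, the machine is freed once $i$ has received its reduced demand $\tilde{p}_i$, so such a job can only complete earlier. For a job with $\tilde{p}_i > p_i$, the extra demand $e_i := \tilde{p}_i - p_i$ is inserted into the timeline immediately after $S^\star$ would have finished $i$; because job $i$ is already released and running at that point, feasibility with respect to the release dates is preserved. The total inserted work over the whole horizon is $\sum_i \max(0, \tilde{p}_i - p_i) \le \sum_i |p_i - \tilde{p}_i| \le \Delta$, so every completion instant is pushed back by at most $\Delta$ while the freed slots only help, delivering the per-job bound $\tilde{C}_i \le C_i + \Delta$.

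I expect the main obstacle to be making the ``inserted work delays each completion by at most $\Delta$'' step fully rigorous in the preemptive setting with release dates, where savings (from jobs with $\tilde{p}_i < p_i$) and insertions (from jobs with $\tilde{p}_i > p_i$) are interleaved across preempted slots. The cleanest way to discharge it is to track, for each time $t$, the cumulative inserted work up to $t$, show it never exceeds $\Delta$, and argue by a shifting/charging argument that the finishing time of any job in the modified schedule exceeds its finishing time in $S^\star$ by no more than this cumulative insertion; release-date feasibility must be verified in tandem, noting that shifting processing later can never violate a release constraint but one must confirm no job is ever scheduled before its arrival.
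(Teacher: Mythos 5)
Your proof is correct, and it takes a genuinely more direct route than the paper's. The paper adapts the two-phase framework of~\cite{bampis2022scheduling}: Phase~1 builds an auxiliary instance in which only the overestimated jobs ($\mathcal{O}$) receive their predicted (longer) durations, and bounds the resulting optimum by extending each such job's final execution part in the $A_1$-optimal schedule---essentially your insertion idea---paying at most $I\sum_{i\in\mathcal{O}}\frac{g_i}{G}\epsilon_i\tilde{\alpha}^{\min}_i$; Phase~2 then shrinks the underestimated jobs ($\mathcal{U}$) to their predicted durations via an exchange argument that tracks the first time where the SRPT schedules of the auxiliary and predicted instances diverge, showing the optimum can only decrease (and quantifying the saving). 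You collapse both phases into a single schedule transformation: starting from the optimal schedule $S^\star$ of $A_1$, truncate each underestimated job once it has received $\tilde{p}_i$ units of work (leaving the freed slots idle) and insert the deficit $e_i=\tilde{p}_i-p_i$ of each overestimated job immediately after its completion, then invoke optimality of $\textit{OPT}_{\tilde{A}_1}$ (SRPT optimality on a single preemptive machine) once at the end. The step you worry about is genuinely dischargeable: the timeline-stretching map $t\mapsto t+\sum_{i\in\mathcal{O}:\,C_i\le t}e_i$ shows every completion is delayed by at most $\sum_{i\in\mathcal{O}}e_i\le\Delta$, and release-date feasibility is immediate because all work only moves later. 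What each approach buys: the paper's phased argument yields a slightly sharper intermediate inequality (only overestimation errors are multiplied by $I$, and underestimation errors are subtracted off), whereas yours trades that unneeded sharpness for a substantially simpler argument, avoiding Phase~2's exchange argument entirely since simple truncation already certifies that shrinking processing times cannot increase the optimum.
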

The detailed proof is given in Appendix~\ref{proof:lemma:single_server_prediction}.

Then, the total job completion time performance result of \ouralg{} immediately follows from Lemmas~\ref{lemma:single_multi_server}--\ref{lemma:single_server_prediction}:

\begin{theorem}[Total job completion time achieved by \ouralg{}]
	\label{theorem:competitive_ratio}
	$\Gamma_A$ is no larger than 
        {\small
	\begin{displaymath}
		(2 + \tau + \frac{\rho G}{G-g^{\max}})\rho + \frac{2\rho g^{\max}\alpha^{\max}}{\alpha^{\min}}({1+\tau} + \frac{(1+\rho)G}{G-g^{\max}})\bar{\epsilon}
	\end{displaymath}}
times the optimal job completion time $OPT_A$, where $\alpha^{\min} \triangleq \min_{i\in[I]}\alpha^{\min}_i$.
\end{theorem}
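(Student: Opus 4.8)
The plan is to obtain the bound purely by composing Lemmas~\ref{lemma:single_multi_server}--\ref{lemma:single_server_prediction} and then converting the residual additive prediction-error term into a multiplicative factor of $OPT_A$. First I would substitute Lemma~\ref{lemma:single_server_prediction} into Lemma~\ref{lemma:single_prediction_multi_server} to eliminate $OPT_{\tilde{A}_1}$, and then substitute Lemma~\ref{lemma:single_multi_server} to replace $OPT_{A_1}$ by $\rho\,OPT_A$. Collecting the $OPT_A$ coefficients yields exactly $\rho(2+\tau+\frac{\rho G}{G-g^{\max}})$, which matches the first summand of the claimed ratio. Collecting the $\epsilon$ coefficients and using the identity $\frac{1}{G}(1+\tau+\frac{\rho G}{G-g^{\max}})+\frac{1}{G-g^{\max}}=\frac{1+\tau}{G}+\frac{1+\rho}{G-g^{\max}}$ produces a single additive term of the form $I\,g^{\max}\alpha^{\max}D\,\epsilon$, where $D=\frac{1+\tau}{G}+\frac{1+\rho}{G-g^{\max}}$.

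The second step is to absorb this additive error term into the competitive ratio. Noting that $G\cdot D = 1+\tau+\frac{(1+\rho)G}{G-g^{\max}}$ is exactly the parenthetical appearing in the theorem, and that $\bar{\epsilon}=\epsilon/I$, the target error summand equals $\frac{2\rho g^{\max}\alpha^{\max}}{\alpha^{\min}}(GD)\frac{\epsilon}{I}\,OPT_A$. Hence it suffices to show $I\,g^{\max}\alpha^{\max}D\,\epsilon \le \frac{2\rho G\,g^{\max}\alpha^{\max}D}{\alpha^{\min} I}\,\epsilon\,OPT_A$, which after cancelling the common positive factor $g^{\max}\alpha^{\max}D\,\epsilon$ reduces to the single scalar inequality $OPT_A \ge \frac{\alpha^{\min} I^2}{2\rho G}$. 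Establishing this lower bound is the crux of the argument.

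To prove the lower bound I would return to the single-machine surrogate $A_1$. Each job $i$ there has processing time $\frac{g_i}{G}n_i\tilde{\alpha}_i^{\min}\ge \frac{\alpha^{\min}}{G}$, since $g_i\ge 1$, $n_i\ge 1$, and $\tilde{\alpha}_i^{\min}\ge\alpha_i^{\min}\ge\alpha^{\min}$ (the Heavy-Edge estimate corresponds to a feasible assignment, so it upper-bounds the true per-job minimum). On one machine, even allowing preemption and arbitrary release times — both of which can only delay completions — the $k$-th job to finish has completion time at least $k\cdot\frac{\alpha^{\min}}{G}$, because completing $k$ distinct jobs consumes at least $k$ times the minimum per-job processing time of machine time. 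Summing over $k\in[I]$ gives $OPT_{A_1}\ge \frac{I(I+1)}{2}\frac{\alpha^{\min}}{G}\ge\frac{I^2\alpha^{\min}}{2G}$. Combining with Lemma~\ref{lemma:single_multi_server} rewritten as $OPT_A\ge OPT_{A_1}/\rho$ yields $OPT_A\ge\frac{I^2\alpha^{\min}}{2\rho G}$, exactly the inequality required; plugging it back converts the additive term into the second summand and completes the proof.

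The main obstacle is recognizing and justifying the quadratic-in-$I$ lower bound $OPT_A\ge \frac{\alpha^{\min}I^2}{2\rho G}$: the naive bound $OPT_A\ge I\alpha^{\min}$ is only linear and is too weak to cancel the $I\epsilon = I^2\bar{\epsilon}$ growth of the additive error term. The contention-based quadratic bound, routed through the single-machine instance $A_1$ and Lemma~\ref{lemma:single_multi_server}, is precisely what makes the final competitive ratio scale with the \emph{average} error $\bar{\epsilon}$ rather than the total error $\epsilon$. The remaining work — verifying the identity for $D$ and the coefficient bookkeeping across the three lemmas — is routine algebra.
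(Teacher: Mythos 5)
Your proposal is correct and follows essentially the same route as the paper's own proof: compose Lemmas~\ref{lemma:single_multi_server}--\ref{lemma:single_server_prediction} to get the bound $(2+\tau+\frac{\rho G}{G-g^{\max}})\rho\,OPT_A + I g^{\max}\alpha^{\max}\bigl[\frac{1+\tau}{G}+\frac{1+\rho}{G-g^{\max}}\bigr]\epsilon$, then invoke the quadratic lower bound $\rho\,OPT_A \ge OPT_{A_1} \ge \alpha^{\min}\frac{I(I+1)}{2G}$ (each job in $A_1$ has processing time at least $\alpha^{\min}/G$, so the $k$-th completion occurs no earlier than $k\alpha^{\min}/G$) to convert the additive $I\epsilon = I^2\bar{\epsilon}$ term into the multiplicative $\bar{\epsilon}$ summand. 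Your write-up in fact makes explicit two points the paper leaves implicit — that $\tilde{\alpha}_i^{\min}\ge\alpha_i^{\min}$ because the \heavyedge{} placement is feasible, and why the linear bound $OPT_A \ge I\alpha^{\min}$ would be too weak — but the decomposition and the key inequality are identical.
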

\begin{proof}
Combining Lemmas~\ref{lemma:single_multi_server}, \ref{lemma:single_prediction_multi_server} and \ref{lemma:single_server_prediction} yields:

\vspace{-.1in}
{\small
\begin{align*}
&\Gamma_A \!\leq\! (1 \!+\! \tau \!+
\! \frac{\rho G}{G \!-\! g^{\max}})\textit{OPT}_{\tilde{A}_1} \!+
\! I\frac{g^{\max}\alpha^{\max}}{G\!-\!g^{\max}}\epsilon \!+\! \rho{OPT_A} \leq \nonumber\\
& \!\! (2 \!+\! \tau \!+\! \frac{\rho G}{G\!-\!g^{\max}})\rho{OPT_A} \!\!+\! Ig^{\max}\alpha^{\max}\bigg[\frac{1\!\!+\!\tau}{G} \!+\! \frac{1\!+\!\rho}{G\!-\!g^{\max}}\bigg]\epsilon.
\end{align*}}
Assuming each job runs at least one iteration, we have $\rho OPT_A \geq OPT_{A_1} \geq \sum\limits_{i = 1}^{I}(i \times \alpha^{\min})/G = \alpha^{\min}\frac{I(I+1)}{2G}$. 
It then follows that

{\small
\begin{align*}
\!\!\frac{\Gamma_A}{OPT_A} \!<\! \bigg[2 \!+\! \tau \!+\! \frac{\rho G}{G \!-\! g^{\max}}\bigg]\rho \!+\! 
2\rho g^{\max} \bar{\rho} \bigg[1\!+\!\tau \!+\! \frac{(1\!+\!\rho)G}{G\!-\!g^{\max}}\bigg]\bar{\epsilon},
\end{align*}}
where $\bar{\rho} \triangleq \frac{\alpha^{\max}}{\alpha^{\min}}$.
This completes the proof.
\end{proof}
{We remark that our competitive ratio bound is for the worst-case scenario. In this scenario, it is assumed that all jobs could potentially be executed with the maximum per-iteration training time $\alpha^{\texttt{max}}_i$, which rarely happens in practice.
} 
Our numerical evaluations based on real-world data traces and popular DNN models show that the performance of \ouralg{}
is much better than the worst-case competitive ratio bound suggests.
Also, Theorem~\ref{theorem:competitive_ratio} says that the performance of \ouralg{} is closely tied to the average error of the employed prediction model. 
In what follows, we propose an efficient prediction model that provides robust estimates based on the actual characteristics of the {jobs}. 


\smallskip
\textbf{3) Random Forest Based Prediction:}
Studies show that most DDL jobs are recurrent, with nearly 65\% submitted at least five times over two months~\cite{weng2022mlaas}. 
This recurrence provides the opportunity for GPU cluster to perform prediction based on repeated job submissions by applying a hashing function to meta-information (e.g., user details, training dataset, and command-line script), thus generating a unique \texttt{group id} for recurrent jobs.
Leveraging \texttt{group id} and historical job data, we employ random forest regression~\cite{breiman2001random} with mean squared error for tree splitting to predict training iterations based on \texttt{group id} and user id. 
We predict 0 iterations for unseen jobs, treating them as immediately complete in $\tilde{A}_1$ and adding them directly to \texttt{pend\_queue} for swift execution, reducing wait times and enhancing efficiency.
We use 100 trees in our random forest regression.
The high efficiency of forest regression allows frequent retrainings (hourly/daily) for accurate predictions. 
Training with a two-month trace of 700,000 DDLwMP jobs~\cite{weng2022mlaas} takes only 80 seconds. Combined with \ouralg{}, our prediction model enables efficient resource allocation and job scheduling in GPU clusters.


\vspace{-.05in}
\section{Performance Evaluation}
\label{sec::eval}

In this section, we conduct both {\em real-world} data-trace-driven testbed experiments and simulation studies to evaluate the performance and efficacy of our proposed \ouralg{} algorithm.

\subsection{Real-World GPU Cluster Testbed Experiments}

{\bf 1) System Settings:}
\noindent\textit{1-a) Implementation and Testbed:}
We implement \ouralg{} using Python and PyTorch 2.1.1~\cite{paszke2019pytorch} with 4634 lines of code. The evaluation of \ouralg{} is conducted on a single server equipped with two NVIDIA H100 NVL GPUs. To simulate a GPU cluster, we utilize the Multi-Instance GPU (MIG)~\cite{mig} technique, partitioning the two H100 GPUs into 14 virtual GPUs (vGPUs), each with 12 GB of GPU memory. The scheduling overhead per job is within 5s.
Due to the MIG configurations, inter-vGPU communication is limited to the PCIe bandwidth of 128 GB/s. Consequently, GPU mapping does not significantly impact our testbed experiment. Therefore, we set the delay factor to zero in \ouralg{}. 
For more heterogeneous inter-GPU networks, we evaluate the performance of \ouralg{} in the simulation studies later in this section.

\textit{1-b) Deep Learning Workload:} The dataset for our job analysis is obtained from an open-source two-month deep learning workload trace collected from a production cluster with 6000 GPUs~\cite{weng2022mlaas}. 
This data-trace contains features including job duration, submission time, user id of the individual submitting the job, requested number of GPUs, and \texttt{group id} that identifies recurring jobs.
After completing a data cleaning process, we obtain a total of 758,223 jobs for analysis. 

However, this data-trace does not provide the training jobs' DNN model information. 
To address this issue, we profile nine representative DNN models on the vGPUs: three image classification models on the ImageNet dataset~\cite{deng2009imagenet} and six natural language processing (NLP) models.
The details of this model profiling are summarized in Table~\ref{table:model}.
Here, BERT-large and XLNet-large are profiled on the SQuAD2.0 dataset~\cite{rajpurkar2018know}. 
For T5 and the three versions of GPT models that cannot be accommodated on a single GPU, we construct a smaller model consisting of three layers from the original model, which will be used for profiling with a token sequence length of 512.
{The distributed training configurations for each model are derived from the planner proposed in~\cite{luo2022efficient}, which calculates multiple configurations per model.}
We assign each model and the derived distributed training configuration to a job group (\ie, a group of recurrent jobs) following the GPU requirement in the trace.
If a job in a group requires only a single GPU, we pair the group with a model with a single-GPU training configuration. 
Otherwise, if the job group demands more than one GPU, we randomly select a model and one of its training configurations for the group. 
{The number of job training iterations} is computed by dividing the job duration in the trace by its approximate minimum per-iteration training time, $\tilde{\alpha}_i^{\min}$.

Due to the limited size of our local testbed, we randomly selected three sets of 75 consecutive jobs from the original traces. We uniformly scaled down the job arrival times and training iterations to 10\% of the original data, resulting in a scheduling period on the order of hours per method.

\begin{figure}[!t]
  \centering
  \includegraphics[width=0.65\columnwidth]{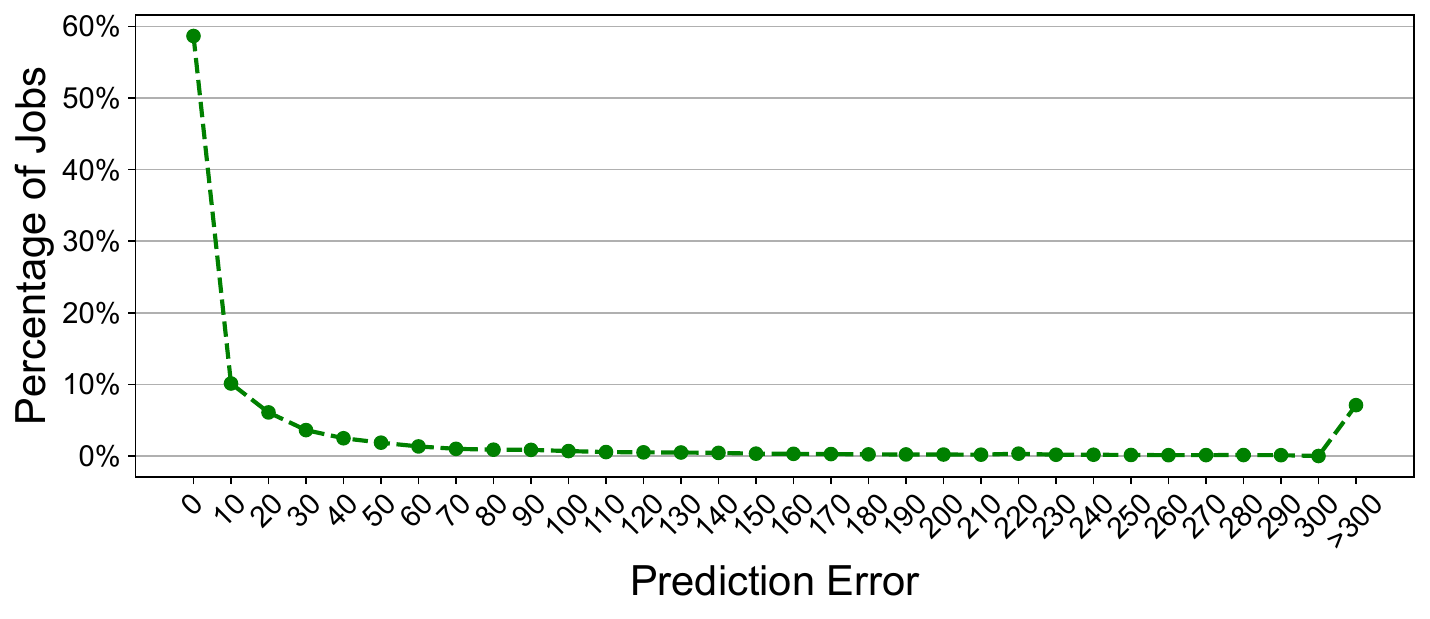}
  \caption{Percentage of jobs: different prediction errors.}
  \label{fig:prediction_error}
\end{figure}
\setlength{\textfloatsep}{0pt}

{\scriptsize
\begin{table}[!t]
\centering
\begin{tabular}{|l|c|c|}
\hline
Model & \# of Parameters & Mini-Batch Size \\ \hline
VGG19~\cite{simonyan2014very} & 144M & 32 \\ \hline
ResNet152~\cite{he2016deep} & 60M & 4 \\ \hline
Inception-V3~\cite{szegedy2016rethinking} & 24M & 32 \\ \hline
BERT-large~\cite{devlin2018bert} & 340M & 4 \\ \hline
XLNet-large~\cite{yang2019xlnet} & 550M & 4 \\ \hline
T5~\cite{raffel2020exploring} & 11B & 8 \\ \hline
GPT~\cite{brown2020language} & 6.7B/13B/175B & 32/32/16 \\ \hline
\end{tabular}
\caption{DNN models.}
\label{table:model}
\end{table}}
\setlength{\textfloatsep}{0pt}

\begin{figure}[!t]
  \centering
  \includegraphics[width=0.5\columnwidth]{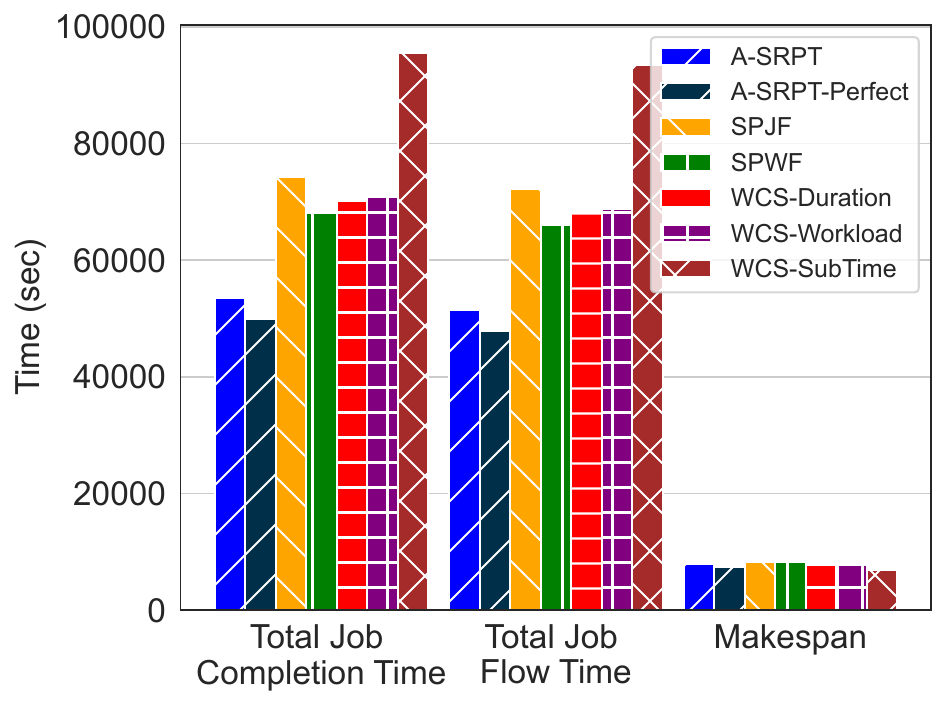}
  \caption{Testbed experiment performance.}
  \label{fig:testbedperf}
\end{figure}
\setlength{\textfloatsep}{0pt}

\textit{1-c) Prediction Model:} 
We use the first 80\% jobs in the trace to train our random forest regression prediction model, completing in just 84 seconds. 
The prediction error is depicted in Fig.~\ref{fig:prediction_error}, which shows that approximately 60\% of the jobs are predicted correctly.
Although there remains a non-negligible prediction error in a small fraction of jobs, our subsequent evaluation reveals that our algorithm outperforms the baseline performance even with imperfect predictions.

\def\spjf{\textit{SPJF}}
\def\spwf{\textit{SPWF}}
\def\wcsduration{\textit{WCS-Duration}}
\def\wcsworkload{\textit{WCS-Workload}}
\def\wcssub{\textit{WCS-SubTime}}
\def\asrptperfect{\textit{A-SRPT-Perfect}}

{\em 1-d) Baselines:} 
Our \ouralg{} algorithm is compared with five baseline GPU cluster scheduling algorithms:
{(1) \spjf{} {\em (Shortest Predicted Job First):} This approach schedules jobs based on their predicted durations as proposed by MLaaS~\cite{weng2022mlaas};
(2) \spwf{} {\em (Shortest Predicted Workload First):} This policy proposed in Tiresias~\cite{gu2019tiresias} schedules jobs according to the product of predicted durations and the number of required GPUs;}
(3) \wcsduration{} {\em (Work-Conserving Scheduler, WCS~\cite{zheng2015exploiting} by Duration):} This approach continuously schedules jobs to use available GPUs within the cluster following the order based on their predicted duration;
(4) \wcsworkload{}: Variant of (3), sequencing by predicted workload;
(5) \wcssub{}: Variant of (3), arranged by submission time.
All baselines adopt the \heavyedge{} algorithm for GPU mapping in both testbed and simulation experiments.

{\bf 2) Experimental Results:}
We present the real testbed results in Fig.~\ref{fig:testbedperf}, averaged over three job sets. 
The total job flow time is defined as the difference between each job's completion time and arrival time, and the makespan is the completion time of the final job. 
We include the baseline \asrptperfect{}, which uses \ouralg{} with perfect knowledge of job durations (\ie, perfect prediction). 
Our \ouralg{} achieves performance close to \asrptperfect{}, with only 7\% longer total job completion time, and significantly outperforms all other baselines. 
While \textit{WCS} baselines achieve shorter system makespans, they prioritize scheduling longer training jobs whenever possible. This blocks the timely execution of later arriving shorter jobs, resulting in larger total job completion times. In contrast, our algorithm reduces the total job completion time by up to 44\%.

\subsection{Large-Scale Simulations}
{\bf 1) System Settings:}
\noindent\textit{1-a) System Settings:} 
We consider a cluster consisting of 250 servers, each equipped with eight GPUs. The NIC bandwidth of each server is set to 10Gbps, and the inter-GPU communication bandwidth within each server is 300GB/s, based on the NVLink specs of NVIDIA V100 GPUs. We profiled all DNN models on a single NVIDIA V100 GPU. For scheduling, we randomly sample consecutive jobs from the original trace.

\begin{figure*}[!t]
        \centering
	\begin{minipage}[t]{0.23\textwidth}
		\includegraphics[width=\textwidth]{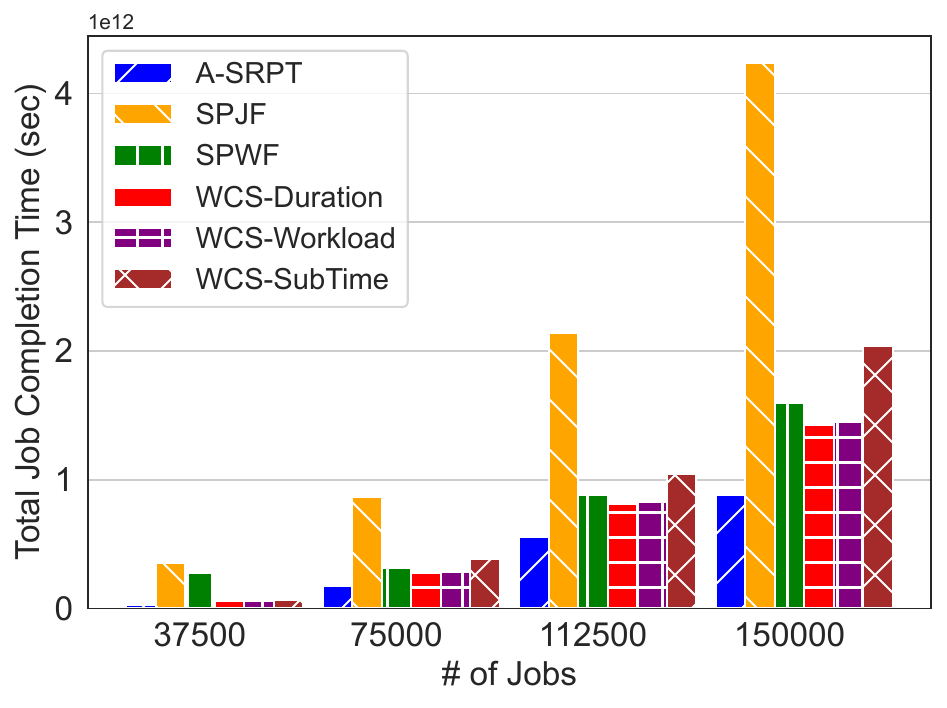}
		\caption{Total job completion time comparisons with different numbers of jobs.}
		\label{fig:diffnumjobs}
	\end{minipage}
	\begin{minipage}[t]{0.23\textwidth}
		\includegraphics[width=\textwidth]{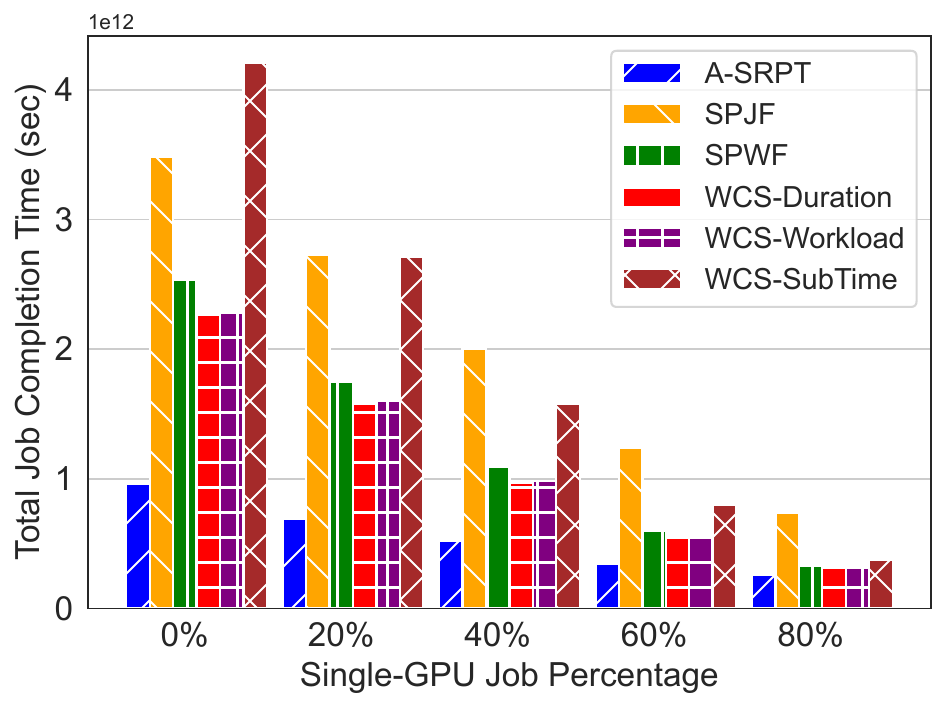}
		\caption{Completion time comparisons with different percentages of single-GPU jobs.}
		\label{fig:diffsinglegpujobs}
	\end{minipage}
	\begin{minipage}[t]{0.23\textwidth}
		\includegraphics[width=\textwidth]{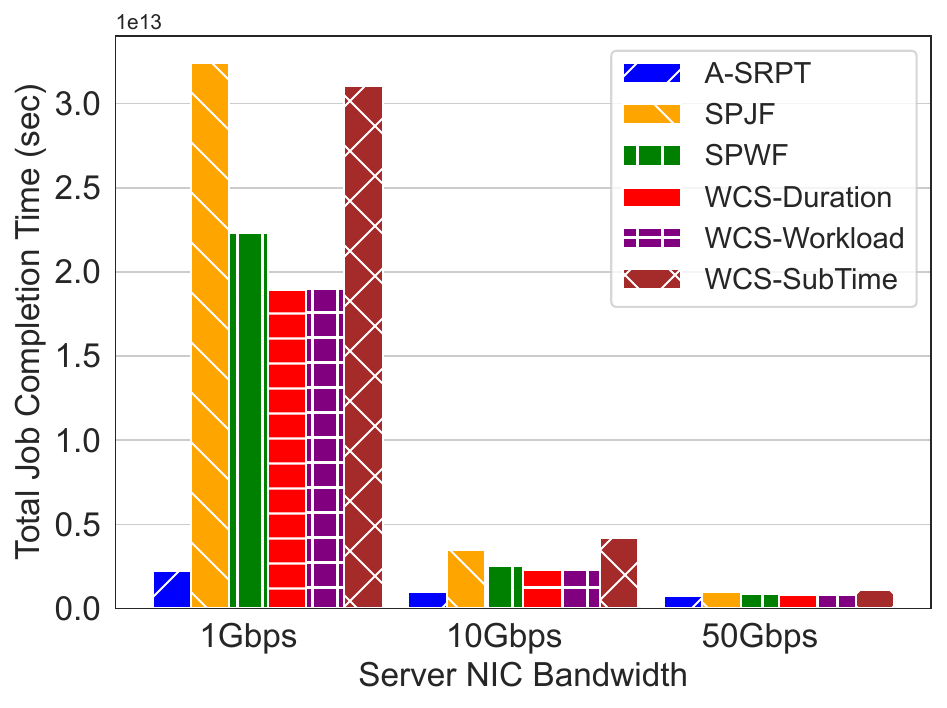}
		\caption{Total job completion time with different server NIC bandwidths.}
		\label{fig:diffbandwidth}
	\end{minipage}
	\begin{minipage}[t]{0.23\textwidth}
            \includegraphics[width=0.94\textwidth]{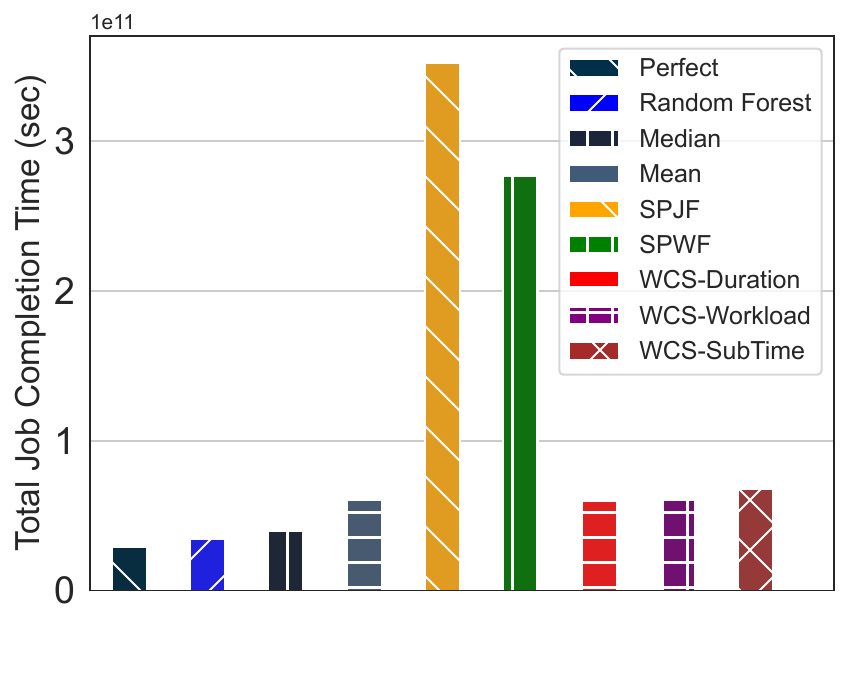}
		\caption{Total completion time comparisons: different prediction models and baselines.}
		\label{fig:diffprediction}
 	\end{minipage}
   \vspace{-.1in}
\end{figure*}

{\bf 2) Experimental Results:}
\noindent\textit{2-a) Different Number of Jobs:}
As the number of jobs increases, the workload and job diversity grow, challenging the online algorithm’s ability to handle varying job sizes.
Fig.~\ref{fig:diffnumjobs} shows total job completion times for \ouralg{} and baselines with job counts from 37,500 to 150,000 (5\% to 20\% of the trace).
\spjf{} performs the worst due to its rigid strategy based solely on predicted durations, neglecting varying GPU demands. If the shortest job does not fit, it will not schedule longer jobs with fewer GPU demands. 
\spwf{} balances job duration with GPU needs, leading to better workload distribution. 
\wcsduration{} and \wcsworkload{} enhance GPU utilization but delay larger jobs by prioritizing smaller ones. \ouralg{} consistently outperforms baselines, reducing total job completion times by 31\% to 91\%.

\textit{2-b) Different Percentages of Single-GPU Jobs:} The original trace~\cite{weng2022mlaas} has over 70\% single-GPU jobs, making scheduling less challenging due to minimal server assignment. 
Thus, we fix the number of jobs at 75,000 and vary the percentage of single-GPU jobs, with jobs randomly set for single-GPU or distributed training. 
As the fraction of distributed jobs increases, the scheduling problem becomes harder due to higher workloads and complex communication. 
Fig.~\ref{fig:diffsinglegpujobs} shows that as single-GPU jobs decrease from 80\% to 0\%, \ouralg{} increasingly outperforms baselines, reducing total job completion time by 16\% to 57\%.

{\em 2-c) Different Server NIC Bandwidths:}
{We evaluate \ouralg{} with server NIC bandwidths from 1 Gbps to 50 Gbps, using the job set with 0\% single-GPU jobs.} 
{Lower bandwidth exacerbates communication overhead, yielding longer total job completion times under poor scheduling. 
Fig.~\ref{fig:diffbandwidth} shows \ouralg{} maintains consistent performance gains, while baselines falter at 1 Gbps.
Notably, at 50 Gbps, \ouralg{} outperforms the best baseline \wcsduration{} by 12\%, and at 1 Gbps, it reduces total job completion time by up to 92\%, demonstrating its effectiveness in handling communication overhead and ensuring efficient job training.}
{\em 2-d) Different Prediction Models:}
We now examine the performance of our prediction model in Fig.~\ref{fig:diffprediction} using jobs with GPU demands following the original trace.
Our random forest regression model is compared with simpler methods based on the mean and median of previous job iterations, as well as a perfect prediction model (\ie, \asrptperfect{}).
All other baselines use random forest regression. 
The average errors for the random forest, median-based, and mean-based models are 369, 563, and 593, respectively. 
The random forest model outperforms simpler methods due to lower average error and is only 14\% less efficient than the perfect model, while less accurate models (e.g., mean-based) significantly degrade algorithm performance.

{\tiny
\begin{table}[t!]
    \centering
    \begin{tabular}{|c|cc|cc|}
        \hline
        \multirow{2}{*}{\textbf{Model}} & \multicolumn{2}{c|}{\heavyedge{}} & \multicolumn{2}{c|}{\textsf{ILP}} \\
        \cline{2-5}
        & \textbf{PITT (ms)} & \textbf{PCT (ms)} & \textbf{PITT (ms)} & \textbf{PCT (ms)} \\
        \hline
        VGG19 & 88.11 & 1.94 & 82.96 & 55318.86 \\
        GPT-175B & 10.14 & 1.52 & 10.14 & 2288.12 \\
        \hline
    \end{tabular}
    \caption{Per-iteration training time (PITT) and placement computation time (PCT): \heavyedge{} vs. \textsf{ILP}}
    \label{tab:comparison_placement}
\end{table}}

{\em 2-e) \heavyedge{} vs. Integer Linear Programming (ILP):}
Finally, we evaluate the performance of \heavyedge{}, with results shown in Table~\ref{tab:comparison_placement}. 
In comparison, the placement is formulated as an ILP problem based on~\cite{archer2023pipeline} and solved optimally using the Gurobi Optimizer~\cite{gurobi}.
Experiments were conducted on a MacBook Pro (M1 MAX chip, 64 GB memory). 
We compare the per-iteration training time (PITT) and placement computation time (PCT) for two of our profiled models, 
averaging results over 20 cases with varying GPU availability per server.
For the VGG19 model, the heterogeneity in computation time and data communication presents challenges in GPU mapping. 
\heavyedge{} achieves a PITT only 6\% longer than the optimal ILP solution, while computing in under TWO milliseconds compared to ILP’s 55+ seconds. 
Moreover, for the GPT-175B model, the uniform structure allows \heavyedge{} to find a solution 1500 times faster than the ILP.

\section{{Discussions}}
\label{sec::discussion}
{We note that the landscape of parallelism for distributed deep learning training continues to evolve.
New methods, such as tensor parallelism~\cite{shoeybi2019megatron} and expert parallelism~\cite{hwang2023tutel}, have been key enablers for training extremely large-scale foundation models~\cite{bommasani2021opportunities}. Reflecting on this, it is interesting to discuss how \ouralg{} can be extended to work with emerging parallelisms to enable DDL scheduling designs for the future.
}

\smallskip
\noindent $\triangleright$ \textbf{Tensor parallelism.} 
Tensor parallelism splits layers across multiple GPUs, necessitating extensive inter-GPU communication through AllReduce operations~\cite{shoeybi2019megatron}. 
To adapt \heavyedge{} for tensor parallelism, we modify our graph model $\Omega = (\mathcal{V}, \mathcal{E})$ to represent tensor slices as vertices and AllReduce operations as weighted edges.
For communication efficiency, all tensor slices of a layer must reside within a single server. 
To accommodate this, \ouralg{} delays the start of tensor parallelism jobs until sufficient server capacity is available.

\smallskip
\noindent $\triangleright$ \textbf{Expert parallelism.} 
Expert parallelism in Mixture-of-Experts (MoE) models distributes different `expert' layers across GPUs, posing challenges in balancing workloads and managing inter-GPU communication~\cite{hwang2023tutel}. 
We can represent expert groups as vertices in our graph-based model. Communication, characterized by sparse activations/gradients and token routing, is represented as weighted edges. 
Due to the dynamic data transfer patterns presented in MoE training, we can set the edge weights based on the estimated average communication costs. 
This allows MoE jobs to be integrated into our unified graph model, enabling effective placement and scheduling with \heavyedge{} and \ouralg{}.

\section{Conclusion}
\label{sec::conclusion}
In this paper, we investigated online scheduling for distributed deep learning with mixed parallelism (DDLwMP) jobs in GPU clusters. 
We introduced the adaptive shortest-remaining-processing-time first (\ouralg{}) scheduling method, which integrates: 1) a GPU mapping algorithm that strategically assigns GPUs to job stages to minimize communication overhead by co-locating communication-intensive parts, and 2) an online scheduling algorithm that uses a prediction model for job scheduling. 
By modeling each DDL job as a graph, our GPU mapping algorithm reduces communication overhead effectively. 
Additionally, we proposed an online scheduling algorithm that transforms the complex GPU cluster scheduling problem into a single-machine instance, which can be optimally solved. 
The scheduling decisions from this simplified problem then guide the actual GPU cluster scheduling.
Theoretical analysis and trace-driven experiments demonstrated \ouralg{}’s efficacy, achieving up to 92\% reduction in total job completion time compared to baselines.

\bibliographystyle{IEEEtran}
\bibliography{./main}

\appendices
\section{Proof of Lemma~\ref{lemma:single_multi_server}}
\label{proof:lemma:single_multi_server}
\begin{proof}Let $\{c_i^*\}$ be the completion times in an optimal schedule of $A$, so $\textit{OPT}_A = \sum_i c_i^*$. 
Create a new instance by setting each job's per-iteration time to $\tilde{\alpha}_i^{\min} \le \rho\,\alpha_i^{\min}$, and scale the timeline of the optimal schedule by $\rho$. Since $\tilde{\alpha}_i^{\min}/\alpha_i^{\min} \le \alpha_i^{\max}/\alpha_i^{\min} \le \rho$, 
this scaled schedule remains valid and each job now completes by time $\rho\,c_i^*$, giving a total completion time at most $\rho\,\textit{OPT}_A$.

We can now construct a schedule of $A_1$ based on the scaled schedule.
To construct a schedule for $A_1$, observe that at any time $t$ in the scaled schedule if $I_t$ is the set of jobs running (each using $g_i$ GPUs), we instead assign each job $i \in I_t$ a fraction $\frac{g_i}{G}$ of the total GPU capacity. This does not delay any completion times, so the total completion time of such schedule of $A_1$ remains at most $\rho\textit{OPT}_A$.
Hence, $\textit{\optAone} \le \rho\textit{OPT}_A$.
\end{proof}

\section{Proof of Lemma~\ref{lemma:single_prediction_multi_server}}
\label{proof:lemma:single_prediction_multi_server}
\begin{proof}
Without loss of generality, let job $i \in [I]$ be the $i$-th job completed in $\tilde{A}_{1}$, and let $C_i(\textit{OPT}_{\tilde{A}_{1}})$ be its completion time in that schedule.
Then
{\small
\begin{equation*}
	C_i(\textit{OPT}_{\tilde{A}_1}) \geq \sum\limits_{j=1}^i\frac{g_j}{G}\tilde{n}_j\tilde{\alpha}^{\min}_j
\end{equation*}}

Let $C_i(\Gamma_A)$ be the completion time of job $i$ under \ouralg{}. Consider a worst-case scenario where jobs $1,\dots,i-1$ begin only after $C_i(\textit{OPT}_{\tilde{A}_1})$. In that scenario, the time to complete these $i-1$ jobs satisfies

{\small
\begin{eqnarray*}
	\mathrm{makespan}_{i-1}\leq C_i(\textit{OPT}_{\tilde{A}_1}) + \sum\limits_{j=1}^{i-1}\frac{g_jn_j\alpha^{\max}_j}{G-g^{\max}} + \sum\limits_{j=1}^{i-1}\frac{\tau g_j\tilde{n}_j\tilde{\alpha}^{\min}_j}{G}
\end{eqnarray*}}
In the above, the second term arises because with no further delays, we can ensure at least $G - g^{\max}$ GPUs are continuously busy for each job's training, and the third term reflects additional delay for communication-heavy jobs, as each delay increases the makespan by up to $\frac{\tau g_j\tilde{n}_j\tilde{\alpha}^{\min}_j}{G}$.

Let $\mathcal{U}$ denote the set of jobs that are underestimated ($\tilde{n}_i< n_i$), and $\mathcal{O}$ denote the set of jobs that are overestimated. A straightforward bounding argument yields

{\small
\begin{eqnarray}
	\sum\limits_{j=1}^{i-1}\frac{g_jn_j\alpha^{\max}_j}{G-g^{\max}} & = & \sum\limits_{j\in \mathcal{U}\cap [i-1]}\frac{g_j}{G-g^{\max}}(\tilde{n}_j + \epsilon_j)\alpha^{\max}_j +\nonumber \\
	&& \sum\limits_{j\in \mathcal{O}\cap [i-1]}\frac{g_j}{G-g^{\max}}(\tilde{n}_j - \epsilon_j)\alpha^{\max}_j + \nonumber\\
	&& \sum\limits_{j\in [i-1]-\mathcal{U}-\mathcal{O}}\frac{g_j}{G-g^{\max}}\tilde{n}_j\alpha^{\max}_j\nonumber\\
	&\leq & \sum\limits_{j=1}^{i-1}\frac{g_j\tilde{n}_j\alpha^{\max}_j}{G-g^{\max}} + \sum\limits_{j\in [i-1]}\frac{g_j\epsilon_j\alpha^{\max}_j}{G-g^{\max}} \nonumber\\
	&\leq & \frac{\rho G}{G-g^{\max}} C_i(\textit{OPT}_{\tilde{A_{1}}}) + \frac{g^{\max}\alpha^{\max}}{G-g^{\max}}\epsilon 
\end{eqnarray}}

Hence,
{\small
\begin{eqnarray}
	\mathrm{makespan}_{i-1}&\leq& (1+\frac{\rho G}{G-g^{\max}})C_i(\textit{OPT}_{\tilde{A}_1}) + \frac{g^{\max}\alpha^{\max}}{G-g^{\max}}\epsilon +\nonumber\\
	& & \sum\limits_{j=1}^{i-1}\frac{\tau g_j\tilde{n}_j\tilde{\alpha}^{\min}_j}{G}	\label{eqn:makespan_bound}
\end{eqnarray}}

Since $C_i(\Gamma_A)$ is at most $\mathrm{makespan}_{i-1}$ plus the time for job $i$ itself, we get
{\small
\begin{eqnarray}
	C_i(\Gamma_{A}) & \leq & \mathrm{makespan}_{i-1} + \frac{\tau g_i\tilde{n}_i\tilde{\alpha}^{\min}_i}{G} + n_i\alpha^{\max}_i\nonumber\\
	& \leq & (1 + \tau + \frac{\rho G}{G-g^{\max}})C_i(\textit{OPT}_{\tilde{A}_1})+\frac{g^{\max}\alpha^{\max}}{G-g^{\max}}\epsilon \nonumber\\
	&&+ n_i\alpha^{\max}_i
\end{eqnarray}}

Observing that $\sum\limits_{i\in[I]}n_i\alpha^{\max}_i \leq \rho \sum\limits_{i\in[I]}n_i\alpha^{\min}_i \leq \rho OPT_A$, we sum over all $i \in [I]$ to derive:
{\small
\begin{eqnarray*}
	\Gamma_{A} & \triangleq & \sum\limits_{i\in [I]}C_i(\Gamma_{A})  \nonumber\\
	& \leq & (1 + \tau + \frac{\rho G}{G-g^{\max}})\textit{OPT}_{\tilde{A}_1} + I\frac{g^{\max}\alpha^{\max}}{G-g^{\max}}\epsilon + \rho{OPT_A} \nonumber\\
\end{eqnarray*}
}
\end{proof}

\section{Proof of Lemma~\ref{lemma:single_server_prediction}}
\label{proof:lemma:single_server_prediction}
\begin{proof}

We adapt the analytic framework of~\cite{bampis2022scheduling} by building an auxiliary schedule (\aux{}) on a modified instance $A^{aux}_{1}$. The idea is to transform both the schedule \optAone{} to \optAonepred{} and the instance $A_{1}$ to $\tilde{A_{1}}$. 
For simplicity, we abuse \aux{} to denote the objective value of the schedules respectively. We achieve the transformation in two phases, one to bound the overestimated jobs ($\mathcal{O}$), and the other to bound the underestimated jobs ($\mathcal{U}$).

\noindent\textbf{Phase 1: Handle Overestimated Jobs ($\mathcal{O}$).}
We begin by bounding all jobs $i \in \mathcal{O}$, that is, all overestimated jobs. We create $A^{aux}_{1}$ by replacing the execution training iterations of any overestimated job $i$ in $A_{1}$, denoted as $n_i$, with the predicted execution time $\tilde{n}_i$. Let \aux{} represent the SRPT schedule (i.e., the optimal schedule) on $A^{aux}_{1}$. We then modify \optAone{} to be a schedule on $A^{aux}_{1}$, which we denote as $\bar{\textit{OPT}}_{A_1}$. 
We define the \textit{execution part} of a job as the uninterrupted period during which the job runs.
$\bar{\textit{OPT}}_{A_1}$ is initialized as \optAone{}. We modify $\bar{\textit{OPT}}_{A_1}$ by iterating over the schedule of all overestimated jobs in \optAone{}. For each job $i \in \mathcal{O}$, we extend the duration of its final execution part in ${\textit{OPT}}_{A_1}$ by $\frac{g_i}{G}\epsilon_i\tilde{\alpha}^{\min}_i$ to obtain $\bar{\textit{OPT}}_{A_1}$. All subsequent execution parts will be delayed by $\frac{g_i}{G}\epsilon_i\tilde{\alpha}^{\min}_i$, increasing $\bar{\textit{OPT}}_{A_1}$ by up to $I\frac{g_i}{G}\epsilon_i\tilde{\alpha}^{\min}_i$. By processing all jobs in $\mathcal{O}$, we increase $\bar{\textit{OPT}}_{A_1}$ by up to $I\sum\limits_{i \in \mathcal{O}}\frac{g_i}{G}\epsilon_i\tilde{\alpha}^{\min}_i$. We denote the total job completion time of \aux{} at the end of phase 1 as $\textit{AUX}[1]$. Therefore, we have:

{\small
\begin{equation}
	\textit{AUX}[1] \leq \bar{\textit{OPT}}_{A_1} \leq \textit{OPT}_{{A_{1}}} + I\sum\limits_{i\in \mathcal{O}}\frac{g_i}{G}\epsilon_i\tilde{\alpha}^{\min}_i
	\label{eqn:phase1}
\end{equation}}

\noindent\textbf{Phase 2: Handle Underestimated Jobs ($\mathcal{U}$).} We now shift our focus to the jobs in $\mathcal{U}$, \ie, the underestimated jobs, transforming \aux{} from \textit{AUX}[1] into \optAonepred{}. We carefully compare the schedules of \aux{} and \optAonepred{} from time 0 until the first time point $t$ where \aux{} and \optAonepred{} schedule two distinct jobs, denoted as $i_{\mathrm{aux}}$ and $i_{\mathrm{pred}}$ respectively.
Considering time $t$, as $\textit{AUX}[1]$ chooses $i_{\mathrm{aux}}$ rather than $i_{\mathrm{pred}}$, we have: 

{\small
\begin{equation}
	\frac{g_{i_{\mathrm{pred}}}}{G}\tilde{n}_{i_{\mathrm{pred}}}\tilde{\alpha}^{\min}_{i_{\mathrm{pred}}} < \frac{g_{i_{\mathrm{aux}}}}{G}\tilde{n}_{i_{\mathrm{aux}}}\tilde{\alpha}^{\min}_{i_{\mathrm{aux}}}< \frac{g_{i_{\mathrm{pred}}}}{G}{n}_{i_{\mathrm{pred}}}\tilde{\alpha}^{\min}_{i_{\mathrm{pred}}}
	\label{eqn:i_aux_vs_i_pred}
\end{equation}}
as $i_{\mathrm{pred}}$ belongs to $\mathcal{U}$. We then decrease the training iteration of $i_{\mathrm{pred}}$ in $A^{aux}_{1}$ from ${n}_{i_{\mathrm{pred}}}$ to $\tilde{n}_{i_{\mathrm{pred}}}$. We maintain the schedule \aux{} up to $t$, and employ SRPT to schedule the remaining workload in $A^{aux}_{1}$, resulting in the updated schedule \aux{}. This approach ensures that job $i_{\mathrm{pred}}$ will be scheduled starting from $t$ in \aux{}, and reduces \aux{} by at least $\frac{g_{i_\mathrm{pred}}}{G}\epsilon_{i_\mathrm{pred}}\tilde{\alpha}^{\min}_{i_\mathrm{pred}}$. We then proceed to identify the subsequent time step $t'$ where \aux{} and \optAonepred{} diverge, and repeat the process. By reducing the actual duration of every underestimated job to the predicted duration, we transform \aux{} to \optAonepred{}. Let $\textit{AUX}[2]$ denote the objective value of \aux{} at the conclusion of phase 2. We now have:

{\small
\begin{equation}
	\textit{OPT}_{\tilde{A_{1}}} = \textit{AUX}[2] \leq \textit{AUX}[1] - \sum\limits_{i\in \mathcal{U}}\frac{g_i}{G}\epsilon_i\tilde{\alpha}^{\min}_i
	\label{eqn:phase2}
\end{equation}
}

Combining (\ref{eqn:phase1}) and (\ref{eqn:phase2}), we have:

{\small
\begin{eqnarray*}
	\textit{OPT}_{\tilde{A_{1}}} & \leq & \textit{AUX}[1] - \sum\limits_{i\in \mathcal{U}}\frac{g_i}{G}\epsilon_i\tilde{\alpha}^{\min}_i \nonumber\\
	&\leq& \textit{OPT}_{{A_{1}}}  + I\sum\limits_{i\in \mathcal{O}}\frac{g_i}{G}\epsilon_i\tilde{\alpha}^{\min}_i - \sum\limits_{i\in \mathcal{U}}\frac{g_i}{G}\epsilon_i\tilde{\alpha}^{\min}_i \nonumber\\
	& \leq & \textit{OPT}_{{A_{1}}}  + I\frac{g^{\max}}{G}\epsilon\alpha^{\max} \nonumber\\
\end{eqnarray*}}
\end{proof}

\end{document}